\newtheorem{mydef}{Definition}
\newtheorem{mytheorem}{Theorem}
\newtheorem{mylemma}{Lemma}
\begin{document}
%
% paper title
% Titles are generally capitalized except for words such as a, an, and, as,
% at, but, by, for, in, nor, of, on, or, the, to and up, which are usually
% not capitalized unless they are the first or last word of the title.
% Linebreaks \\ can be used within to get better formatting as desired.
% Do not put math or special symbols in the title.
\title{Differentially Private Combinatorial Cloud Auction}
%
%
% author names and IEEE memberships
% note positions of commas and nonbreaking spaces ( ~ ) LaTeX will not break
% a structure at a ~ so this keeps an author's name from being broken across
% two lines.
% use \thanks{} to gain access to the first footnote area
% a separate \thanks must be used for each paragraph as LaTeX2e's \thanks
% was not built to handle multiple paragraphs
%
%
%\IEEEcompsocitemizethanks is a special \thanks that produces the bulleted
% lists the Computer Society journals use for "first footnote" author
% affiliations. Use \IEEEcompsocthanksitem which works much like \item
% for each affiliation group. When not in compsoc mode,
% \IEEEcompsocitemizethanks becomes like \thanks and
% \IEEEcompsocthanksitem becomes a line break with idention. This
% facilitates dual compilation, although admittedly the differences in the
% desired content of \author between the different types of papers makes a
% one-size-fits-all approach a daunting prospect. For instance, compsoc
% journal papers have the author affiliations above the "Manuscript
% received ..."  text while in non-compsoc journals this is reversed. Sigh.

\author{Tianjiao Ni,
        Zhili Chen*, Lin Chen, Hong Zhong, Shun Zhang, Yan Xu
        % <-this % stops a space
\IEEEcompsocitemizethanks{\IEEEcompsocthanksitem Tianjiao Ni,
        Zhili Chen, Hong Zhong, Shun Zhang, Yan Xu are with the School of Computer Science and Technology, Anhui University, Hefei, China, 230601. Zhili Chen is the corresponding author.\protect\\ 
        E-mail:tjni94@163.com; zlchen@ahu.edu.cn;  zhongh@mail.ustc.edu.cn; shzhang27@163.com; xuyan@ ahu.edu.cn.
\IEEEcompsocthanksitem Lin Chen is with Lab. Recherche Informatique (LRI-CNRS UMR 8623), Univ. Paris-Sud, 91405 Orsay, France.\protect\\
E-mail:chen@lri.fr.}}
% note need leading \protect in front of \\ to get a newline within \thanks as
% \\ is fragile and will error, could use \hfil\break instead.
%E-mail: see http://www.michaelshell.org/contact.html
%\IEEEcompsocthanksitem J. Doe and J. Doe are with Anonymous University.}}% <-this % stops an unwanted space
%\thanks{Manuscript received April 19, 2005; revised August 26, 2015.}}

% note the % following the last \IEEEmembership and also \thanks -
% these prevent an unwanted space from occurring between the last author name
% and the end of the author line. i.e., if you had this:
%
% \author{....lastname \thanks{...} \thanks{...} }
%                     ^------------^------------^----Do not want these spaces!
%
% a space would be appended to the last name and could cause every name on that
% line to be shifted left slightly. This is one of those "LaTeX things". For
% instance, "\textbf{A} \textbf{B}" will typeset as "A B" not "AB". To get
% "AB" then you have to do: "\textbf{A}\textbf{B}"
% \thanks is no different in this regard, so shield the last } of each \thanks
% that ends a line with a % and do not let a space in before the next \thanks.
% Spaces after \IEEEmembership other than the last one are OK (and needed) as
% you are supposed to have spaces between the names. For what it is worth,
% this is a minor point as most people would not even notice if the said evil
% space somehow managed to creep in.

% The paper headers
\markboth{IEEE TRANSACTIONS ON DEPENDABLE AND SECURE COMPUTING.,~Vol.~0, No.~0, January~2020}%
{Shell \MakeLowercase{\textit{et al.}}: Bare Demo of IEEEtran.cls for Computer Society Journals}
% The only time the second header will appear is for the odd numbered pages
% after the title page when using the twoside option.
%
% *** Note that you probably will NOT want to include the author's ***
% *** name in the headers of peer review papers.                   ***
% You can use \ifCLASSOPTIONpeerreview for conditional compilation here if
% you desire.

% The publisher's ID mark at the bottom of the page is less important with
% Computer Society journal papers as those publications place the marks
% outside of the main text columns and, therefore, unlike regular IEEE
% journals, the available text space is not reduced by their presence.
% If you want to put a publisher's ID mark on the page you can do it like
% this:
%\IEEEpubid{0000--0000/00\$00.00~\copyright~2015 IEEE}
% or like this to get the Computer Society new two part style.
%\IEEEpubid{\makebox[\columnwidth]{\hfill 0000--0000/00/\$00.00~\copyright~2015 IEEE}%
%\hspace{\columnsep}\makebox[\columnwidth]{Published by the IEEE Computer Society\hfill}}
% Remember, if you use this you must call \IEEEpubidadjcol in the second
% column for its text to clear the IEEEpubid mark (Computer Society jorunal
% papers don't need this extra clearance.)

% use for special paper notices
%\IEEEspecialpapernotice{(Invited Paper)}

% for Computer Society papers, we must declare the abstract and index terms
% PRIOR to the title within the \IEEEtitleabstractindextext IEEEtran
% command as these need to go into the title area created by \maketitle.
% As a general rule, do not put math, special symbols or citations
% in the abstract or keywords.
\IEEEtitleabstractindextext{%
\begin{abstract}
Cloud service providers typically provide different types of virtual machines (VMs) to cloud users with various requirements. Thanks to its effectiveness and fairness, \textit{auction} has been widely applied in this heterogeneous resource allocation. Recently, several strategy-proof combinatorial cloud auction mechanisms have been proposed. However, they fail to protect the bid privacy of users from being inferred from the auction results. In this paper, we design a \textit{differentially private} combinatorial cloud auction mechanism (DPCA) to address this privacy issue. Technically, we employ the exponential mechanism to compute a clearing unit price vector with a probability proportional to the corresponding revenue. We further improve the mechanism to reduce the running time while maintaining high revenues, by computing a single clearing unit price, or a subgroup of clearing unit prices at a time, resulting in the improved mechanisms DPCA-S and its generalized version DPCA-M, respectively. We theoretically prove that our mechanisms can guarantee differential privacy, approximate truthfulness and high revenue. Extensive experimental results demonstrate that DPCA can generate near-optimal revenues at the price of relatively high time complexity, while the improved mechanisms achieve a tunable trade-off between auction revenue and running time.
\end{abstract}

% Note that keywords are not normally used for peerreview papers.
\begin{IEEEkeywords}
Cloud computing, Differential privacy, Virtual Machine, Combinatorial auction, Truthfulness, Revenue
\end{IEEEkeywords}}

% make the title area
\maketitle

% To allow for easy dual compilation without having to reenter the
% abstract/keywords data, the \IEEEtitleabstractindextext text will
% not be used in maketitle, but will appear (i.e., to be "transported")
% here as \IEEEdisplaynontitleabstractindextext when the compsoc
% or transmag modes are not selected <OR> if conference mode is selected
% - because all conference papers position the abstract like regular
% papers do.
\IEEEdisplaynontitleabstractindextext
% \IEEEdisplaynontitleabstractindextext has no effect when using
% compsoc or transmag under a non-conference mode.

% For peer review papers, you can put extra information on the cover
% page as needed:
% \ifCLASSOPTIONpeerreview
% \begin{center} \bfseries EDICS Category: 3-BBND \end{center}
% \fi
%
% For peerreview papers, this IEEEtran command inserts a page break and
% creates the second title. It will be ignored for other modes.
\IEEEpeerreviewmaketitle

\IEEEraisesectionheading{\section{Introduction}\label{sec:introduction}}
% Computer Society journal (but not conference!) papers do something unusual
% with the very first section heading (almost always called "Introduction").
% They place it ABOVE the main text! IEEEtran.cls does not automatically do
% this for you, but you can achieve this effect with the provided
% \IEEEraisesectionheading{} command. Note the need to keep any \label that
% is to refer to the section immediately after \section in the above as
% \IEEEraisesectionheading puts \section within a raised box.

% The very first letter is a 2 line initial drop letter followed
% by the rest of the first word in caps (small caps for compsoc).
%
% form to use if the first word consists of a single letter:
% \IEEEPARstart{A}{demo} file is ....
%
% form to use if you need the single drop letter followed by
% normal text (unknown if ever used by the IEEE):
% \IEEEPARstart{A}{}demo file is ....
%
% Some journals put the first two words in caps:
% \IEEEPARstart{T}{his demo} file is ....
%
% Here we have the typical use of a "T" for an initial drop letter
% and "HIS" in caps to complete the first word.
\IEEEPARstart{C}{loud} computing provides a platform for a large number of users to access computing resources such as CPU and memory. Today, cloud service providers (e.g., Amazon EC2 and Microsoft Azure) typically use virtualization techniques to configure their resources as different types of Virtual Machines (VM) and then sell them to cloud users with fixed-price methods. Regrettably, fixed-price allocation mechanisms cannot reflect the dynamic supply-demand relationship of the market and may lead to economic inefficiency. %This may result in a waste of resources and lower revenue.
To mitigate this problem, auction-based pricing polices have emerged in the cloud market. For example, the Amazon EC2 Spot Instance \cite{Amazon.org} has employed an auction mechanism to allocate idle VM instances to users. Auction has been proven to be an effective market-based cloud service transaction mechanism not only allowing users to get the resources they need at appropriate prices, but also enabling cloud providers to leverage more resources to improve their profits.

In an auction mechanism, truthfulness (a.k.a. strategy-proofness) is one of the most crucial economic properties. A truthful auction incentivizes bidders to bid their true valuations. Recently, a number of truthful cloud auction mechanisms~\cite{Wang2012When,Zaman2010Combinatorial,Shi2014RSMOA} have been developed. Unfortunately, maximizing utility therein comes at the expense of disclosing bid information, which reflects the bidders' preferences and demands for VM resources. Adversaries may utilize these bids to infer bidders' computing requirements, the workload patterns and etc. These information may be vital commercial secrets. Therefore, it is important to protect bidders' bid privacy.

To prevent information leakage, only a handful of researches \cite{Chen2016On} has focused on privacy protection in cloud auction, which protects privacy from a dishonest auctioneer in the process of auction computations. But it did not consider that an attacker can infer the bid information based on the published outcome. Usually, the auctioneer is considered to be trusted in a sealed auction, and he will publish the auction results including the winners and their payments. In most of the cloud auctions, VM resources are only available to bidders for a certain period of time, bidders may participate in cloud auction frequently. However, once bidders submit the true valuation to the reliable  auctioneer, an attacker (who may be a bidder or anyone seeing these outcomes except the auctioneer) may infer some private information of bidders from outcomes of two or more auctions.
%{\color{red} need an example for privacy }
%For instance, an attacker can try a different bid at each round of Spot Instance and analyze other bidders' bid information by multiple rounds of different outcomes.

To address this privacy issue, the notion of differential privacy \cite{Dwork2006Differential}, which can preserve privacy for published results with strong theoretical guarantee, has been applied as a compelling privacy model. While there exist a number of differentially private auction-based mechanisms \cite{mcsherry2007mechanism,zhu2014differentially,Jin2016Enabling,Jian2017BidGuard} to limit privacy leakage, cloud auction with differential privacy guarantee \cite{xu2017pads} remains largely unaddressed, especially for the combinatorial auctions that are the most suitable for solving VM pricing and allocation problems. Unlike traditional auctions,
%the privacy preserving mechanism in combinatorial cloud auction requires us to address the following challenges. The first is the heterogeneity of VM types and user preferences.
design a differentially private auction mechanism in heterogeneous environment is a non-trivial task. On the one hand, different users request a set of VM instances of different types. A reasonable pricing and allocation strategies should depend on the VM instances they request. Meanwhile, it should motivate bidders to report their true valuation while guaranteing the revenue of the provider.
On the other hand, reporting the true valuations may disclose bidders' privacy information ( bids and the number of VM instances ). The auction mechanism should be designed to achieve both truthfulness and privacy protection.

%there is no research on differential privacy in this filed.Second, the mechanism should prevent privacy disclosure while satisfying some properties of auction including (approximate) truthfulness and individual rationality. In additional, it needs to guarantee the revenue of the service provider.

%Motivated by the above gap,
To cope with the aforementioned challenges, we develop a Differentially Private Combinatorial Cloud Auction (DPCA) mechanism in this paper, where a cloud service provider provides various types of VMs to heterogeneous users. To keep users' bids private and achieve high revenue of the provider, we leverage the exponential mechanism to select the final clearing unit price vector, exponentially proportional to the corresponding revenue. Technically, the mechanism is carefully designed to meet the instance constraint and achieve approximate truthfulness. Armed with the baseline mechanism, we further reduce the running time while maintaining the high revenue, by computing a single clearing unit price, or a subgroup of clearing unit prices at a time, resulting in the improved mechanisms DPCA-S and its generalized version DPCA-M, respectively.
%in which the probability of each price selected is proportional to the corresponding revenue. Moreover, the determination of winners is based on pricing strategy and instance constraint.

Our main contributions can be summarized as follows.
\begin{itemize}
   \item To the best of our knowledge, we are the first to apply differential privacy in combinatorial cloud auction. Our proposition DPCA can not only prevent the disclosure of bidders' bids, but also ensure approximate truthfulness and high revenue.
       %which achieves differential privacy and related economic properties.
    \item To reduce computational complexity, we propose DPCA-S which uses an exponential mechanism to select single unit price of one type of VM each time. Furthermore, we design DPCA-M to select the combination of multiple unit prices at a time to improve revenue.
    %Through theoretical analysis, we prove that our mechanism achieves differential privacy, approximate truthfulness, approximate revenue maximization.
    \item We fully implement the proposed mechanisms, and conduct extensive experiments to evaluate their performances. The experimental results demonstrate that DPCA can generate near-optimal revenue at the price of relatively high time complexity, while the improved mechanisms achieve a tunable trade-off between auction revenue and running time.
        %different mechanism can achieve its corresponding advantages in different situation. % our mechanism achieves a good performance compared to combinatorial cloud auction without differential privacy guarantee.
\end{itemize}

The remainder of this paper is organized as follows.
Section~\ref{sec:relatedwork} briefly reviews the related work. In Section~\ref{sec:preliminaries}, we introduce auction model and some important concepts. And we present the detailed design of DPCA and prove the related properties in Section \ref{sec:dpca}. And we further propose the improved mechanisms in Section \ref{sec:improve}. In Section \ref{sec:experiment}, we implement our mechanisms, and evaluate their performances. Finally, the paper is concluded in Section
\ref{sec:conclusion}.

\section{Related Work}\label{sec:relatedwork}
As an effective transaction method, auction has been extensively used in the field of cloud computing. Wang $et$ $al.$ \cite{Wang2012When} proposed a computationally efficient and truthful cloud auction mechanism. And Zaman $et$ $al.$ \cite{Zaman2010Combinatorial} designed a combinatorial auction mechanism for the case of static provisioning of VM instances, which can capture the competition of users. To satisfy user's demand and generate higher revenue or social welfare, \cite{Zaman2014A, Mashayekhy2015A,Wang2013Revenue} proposed the truthful mechanisms for dynamic VM provisioning and allocation. Zhang $et$ $al.$ \cite{Zhang2014Dynamic} designed a randomized combinatorial auction by leveraging a pair of primal and dual linear programs (LPs). Du $et$ $al.$\cite{du2019learning} studied a deep reinforcement learning approach in the cloud resource allocation and pricing to maximize the provider's revenue. In paper \cite{Wei2013Dominant}, the authors presented a generalized dominant resource fairness mechanism in heterogeneous environments. Recently, online auction mechanisms \cite{Shi2014RSMOA, Hong2013A, Mashayekhy2016An} have attracted a great deal of attention. Zhang $et$ $al.$ \cite{Hong2013A} utilized the pricing-curve method and proposed a truthful online auction with one type of VM. Later, papers \cite{Shi2014RSMOA, Mashayekhy2016An} took into account heterogeneous VM instances. And in the Blockchain networks, the paper \cite{jiao2019auction} proposed an auction-based market model to realize the efficient allocation of computing resources for the transactions between the cloud/fog computing service provider and miners. However, none of these researches addressed the privacy concerns.

In recent years, people have paid more attention to the issue of privacy. To the best of our knowledge, only a few auction-based cloud resource allocation mechanisms took privacy into consideration. Chen $et$ $al.$ \cite{Chen2016On} designed a privacy-preserving cloud auction with cryptographical techniques. Later, for the two-sided cloud market, Cheng $et$ $al.$ \cite{cheng2019towards} proposed a efficient privacy-preserving double auction mechanism. It did not disclose any information about bid other than the auction results. To prevent the attacker from inferring sensitive information based on published outcome, the concept of differential privacy is proposed. McSherry $et$ $al.$ \cite{mcsherry2007mechanism} proposed the first differentially private auction mechanism. Then, differential privacy has been applied to specifical auction-based scenarios including spectrum auctions \cite{zhu2014differentially,zhu2015differentially,8676063}, spectrum sensing \cite{jin2018privacy} and mobile crowd sensing \cite{Jin2016Enabling, Jian2017BidGuard,jin2018dpda,gao2019dpdt}. In paper \cite{xu2017pads}, incorporating differential privacy, the authors proposed an auction-based mechanism for resource allocation in cloud in the presence of only one type of VM. As far as we know, none of existing works proposes the cloud auction mechanism for multiply types of VMs allocation with differential privacy guarantee.

\section{Technical Preliminaries}\label{sec:preliminaries}

In this section, we present the auction model for combinatorial virtual machine provisioning and allocation in cloud computing, and review some important concepts to facilitate the comprehension of our schemes.

\subsection{Cloud Auction Model}
We consider a sealed-bid cloud resource allocation auction with an auctioneer (seller) and a group of buyers, which is completed in a time windom.
%For simplicity, we do not consider the effect of time here.
A Cloud Service Provider (CSP), i.e., the auctioneer who is trustworthy, sets $m$ different types of virtual machine (VM) instances represented by $\mathbb{M}=\{VM_1, VM_2,..., VM_m\}$. These $m$ types of instances are mainly composed of different CPU powers (e.g., dual-core, quad-core), memory sizes (e.g., small, medium, large), and operating systems (e.g., Linux/UNIX, Microsoft Windows), etc. Each type of VM has only a certain number of instances. Let $\mathbf{K}=\{K_1, K_2, ..., K_m\}$ denote the profile of all types of VM instances, where $K_i$ represents the number of $VM_i$ instances offered by the CSP with values in the range $[K_{min},K_{max}]$.
%The relative weights of these VM instances are characterized by a vector $\mathbf{w}=\{w_1, w_2, ..., w_m\}$. The higher CPU speed or larger memory the type of VM has, the higher weight it has.

Suppose there are $n$ users as buyers requesting these VM instances. Here $\mathbb{N}=\{1, 2, ..., n\}$ denotes the set of users. Each user $j$ submits his bid profile $B_j=<k_j,b_j>$, where $k_j=\{k_j^1, k_j^2, ..., k_j^m\}$ denotes the profile of numbers of $m$ different types of VM instances requested by user $j$, and $b_j=\{b_j^1, b_j^2, ..., b_j^m\}$ represents the profile of the per-instance bids for $m$ different types of VM instances submitted by user $j$. Here, we assume that the number of $VM_i$ instances that user $j$ requests is in the interval $[0, q_{max}]$. And it should meet the requirement of $q_{max} \ll K_i$ (If the user requests a large number of resources, he can work directly with the provider, without the need for the auction). The bid of user $j$ is based on his true valuation for the same VM instances denoted by $v_j=\{v_j^1, v_j^2, ..., v_j^m\}$, and $v_j^i$ is limited in the range of $[ v_{min}, v_{max}]$. If the user $j$ is not interested in $VM_i$ , then $v_j^i=0$. The total bid of user $j$ for his desired VM bundle is $\overline{B}_j=\sum_{i=1}^m {k_j^i}{b_j^i}$. And the corresponding total valuation is $\overline{V}_j=\sum_{i=1}^m {k_j^i}{v_j^i}$.
The profile of bid of all users is represented by $\mathbf{B}=\{B_1, B_2, ..., B_n\}$.

The outcome of an auction contains the profiles of allocation and payment. Let $\mathbf{x}=\{x_1,x_2...,x_n\}$ denote the allocation profile, where $x_j \in \{0,1\}$ indicates whether the user $j$ gets the requested VM instances. The payments are represented by $\mathbf{P}=\{P_1,P_2,...,P_n\}$ where $P_j$ is the price that user $j$ ultimately pays. Here we consider the users are single-minded. They are only willing to pay for the VM instances they request, and for other VM instances, the payments are zero (i.e., $P_j=0$). %And the average price per weighted instance called price density is:
%\begin{displaymath}
%\overline{p_j}=\frac{P_j}{\sqrt{\sum_{i=1}^m r_j^i w_i}}
%\end{displaymath}

In an auction, each user $j$ is considered selfish and rational, and his objective is to maximize his own utility $u_j$:
\begin{displaymath}
u_j=\overline{V}_j x_j-P_j
\end{displaymath}
The CSP also wants to maximize his revenue, where the revenue is the sum of payments of all users:
\begin{displaymath}
REV=\sum_{j=1}^n P_j x_j
\end{displaymath}

\subsection{Attack model}

In the truthful cloud auction mechanism where the auctioneer is trusted, it requires each bidder to submit his true valuation. But once the true valuation is submitted, the attacker (anyone other than the auctioneer) can infer the bidder's private information from the public auction outcome. %Assuming that  and the attacker is anyone other than the auctioneer, including the bidder.
In most of the cloud auctions, VM instances are only available for short-term use by bidders, so bidders may participate in auctions frequently. This makes it easier to leak private information.
\begin{figure}[htbp]
\begin{center}
\includegraphics[width=0.3\textwidth]{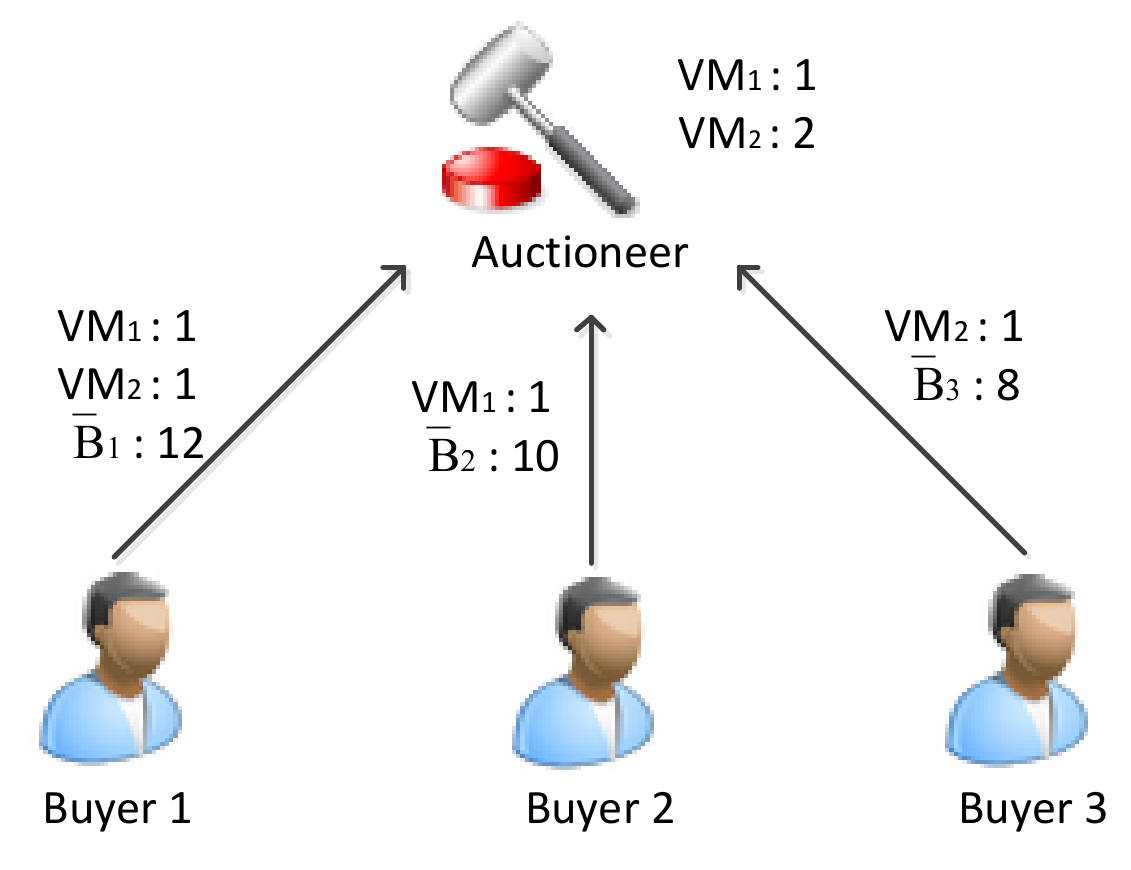}
%\captionsetup{justification=centering}
\caption{A cloud auction attack example }\label{fig:example}
\end{center}
\end{figure}
\begin{figure}
\end{figure}

To better understand this attack model, we present an attack example as shown in Fig.~\ref{fig:example}. Suppose there are two types of VM with one $VM_1$ instance and two $VM_2$ instances. And three buyers compete for these instances, in which $Buyer\ 1$ requests one $VM_1$ instance and one $VM_2$ instance with total bid 12, $Buyer\ 2$ requests one $VM_1$ instance with total bid 10 and $Buyer\ 3$ requests one $VM_2$ instance with total bid 8. According to the truthful cloud auction \cite{Wang2012When}, the winners are $Buyer\ 2$ and $Buyer\ 3$. Here we simplify the ranking metric \cite{Wang2012When} to \textit{average bid per each instance}. And the payments of $Buyer\ 2,Buyer\ 3$ are 6 and 0, respectively. Assume that $Buyer\ 2$ is an attacker, and he changes his total bid from 10 to 5 in the second auction, while other buyers' bids remain unchanged. Then the auction result becomes that winners are $Buyer\ 1$ and $Buyer\ 3$ and their payments are 10 and 0, respectively.
%Assume in the second auction, $Buyer\ 2$ changes his total bid from 10 to 5 due to the completion of a task, while other buyers' bid remain unchanged. Then the auction results becomes that winners are $Buyer\ 1$ and $Buyer\ 3$ and their payments are 10 and 0, respectively.
According to the fact that $Buyer\ 3$ pays 0 in the second auction, it can be known that he has no \textit{critical bidder} \cite{Wang2012When}. And it is easy for $Buyer\ 2$ to deduce that $Buyer\ 3$ only requests $VM_2$ and his \textit{average bid per each instance} is more than 5, since if $Buyer\ 3$ requests $VM_1$, the loser $Buyer\ 2$ requesting one $VM_1$ instance will be his \textit{critical bidder}. It is also known that the \textit{average bid per each instance} of $Buyer\ 1$ is between 5 and 10 by the fact that if  $Buyer\ 2$ bids 10, he will win, and if he bids 5, he will be replaced by $Buyer\ 1$. So the privacy of $Buyer\ 1$ and $Buyer\ 3$ is compromised.

\subsection{Mechanism Design Concepts}
%Firstly, we show a fundamental solution concept from game theory.

Now we introduce some related solution concepts from mechanism design and differential privacy.
First, we review the definition of the dominant strategy in an auction mechanism.
\begin{mydef}[Dominant Strategy \cite{Nisan2007Algorithmic}]
Strategy $s_i$ is a player $i$'s dominant strategy in a game, if for any strategy ${s'_i}$ $\neq$ $s_i$ and other players' strategy profile $s_{-i}$,
\begin{displaymath}
u_i(s_i, s_{-i}) \geq u_i({s'_i}, s_{-i})
\end{displaymath}
\end{mydef}
In an auction, truthfulness is associated with a dominant strategy meaning that users reveal their truthful bids. But in some cases, it is too restrictive to satisfy exact truthfulness. So we turn to consider a weaker but more practical version of truthfulness, that is, approximate truthfulness or $\gamma$-truthfulness\cite{Gupta2010Differentially}.
\begin{mydef}[$\gamma$-truthfulness]
Let $\overline{V}_j$ denote the total bid when user $j$ bids truthfully. A auction mechanism is $\gamma$-truthfulness in expectation, if and only if for any total bid $\overline{B}_j \neq \overline{V}_j$ and other users' total bid profile $\mathbf{\overline{V}_{-j}}$, it satisfies
\begin{displaymath}
E[u_j(\overline{V}_j, \mathbf{\overline{V}_{-j}})] \geq E[u_j(\overline{B}_j, \mathbf{\overline{V}_{-j}})]-{\gamma}
\end{displaymath}
where $\gamma$ is a small constant.
\end{mydef}

Differential privacy \cite{Dwork2006Differential,dwork2008differential} ensures that for any two databases differing in one record, the probability of outputting the same result is essentially identical. Thus it is difficult for an adversary to infer users' private information from the auction results.
\begin{mydef}[Differential Privacy]
Let $\mathbf{B}$ denote the bid profile of all users for required VM instances. A randomized cloud auction mechanism $\mathcal{M}$ is $\epsilon$-differentially private if for any two two data profiles $\mathbf{B}$ and $\mathbf{B'}$ with only one different bid, and $S$ $\subseteq$ Range($\mathcal{M}$), it satisfies
\begin{displaymath}
Pr[\mathcal{M}(\mathbf{B})\in S]\leq exp(\epsilon)\times Pr[\mathcal{M}(\mathbf{B'})\in S]
\end{displaymath}
\end{mydef}
where $\epsilon$ is a small positive constant called privacy budget meaning the level of privacy protection.

Exponential mechanism \cite{mcsherry2007mechanism} is an effective tool to realize differential privacy. It determines a utility score for any possible output, and the higher score is, the more likely the candidate output will be chosen. Specifically, it is defined as follows:
\begin{mydef}[Exponential Mechanism]
Given a range $\mathcal{P}$ and a utility function $Q(\mathbf{B},p)$ which maps a pair of the input profile $\mathbf{B}$ and an output $p$ in the range $\mathcal{P}$ to a real-valued score. The exponential mechanism $\mathcal{M}(\mathbf{B}, Q, \mathcal{P})$ selects and outputs $p \in \mathcal{P}$ with probability
\begin{displaymath}
Pr[\mathcal{M}(\mathbf{B}, Q, \mathcal{P})= p]\propto exp(\frac{\epsilon Q(\mathbf{B}, p)}{2\Delta Q})
\end{displaymath}
where $\Delta Q$ is the sensitivity of function $Q$, that is, for any $p \in \mathcal{P}$ and for any two profiles $\mathbf{B}$ and $\mathbf{B'}$ differing in a single element,  the largest change in $Q(\mathbf{B}, p)$ and $Q(\mathbf{B'}, p)$, and $\epsilon$ is the privacy budget.
\end{mydef}

\begin{mytheorem}[\cite{dwork2014algorithmic}]\label{the:dp}
The exponential mechanism guarantees $\epsilon$-differential privacy.
\end{mytheorem}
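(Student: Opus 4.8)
The plan is to verify the definition of $\epsilon$-differential privacy directly from the probability law of the exponential mechanism. Since the output ranges over the discrete set $\mathcal{P}$, it suffices to bound the ratio of selection probabilities for a single output $p$ under two neighbouring profiles $\mathbf{B}$ and $\mathbf{B'}$ that differ in a single bid, and then extend to an arbitrary set $S$ by summation. First I would write the normalized probability explicitly as
\begin{displaymath}
Pr[\mathcal{M}(\mathbf{B}, Q, \mathcal{P}) = p] = \frac{\exp\left(\frac{\epsilon Q(\mathbf{B}, p)}{2\Delta Q}\right)}{\sum_{p' \in \mathcal{P}} \exp\left(\frac{\epsilon Q(\mathbf{B}, p')}{2\Delta Q}\right)},
\end{displaymath}
and form the ratio $Pr[\mathcal{M}(\mathbf{B}) = p] / Pr[\mathcal{M}(\mathbf{B'}) = p]$. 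This ratio factors cleanly into a \emph{numerator term} involving only $Q(\mathbf{B}, p)$ and $Q(\mathbf{B'}, p)$, and a \emph{normalizer term} involving the two partition sums.

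The key observation is that the factor $2$ in the exponent is precisely what splits the privacy budget evenly between these two terms. For the numerator ratio I would invoke the sensitivity bound $|Q(\mathbf{B}, p) - Q(\mathbf{B'}, p)| \leq \Delta Q$ from the definition of $\Delta Q$, which immediately gives
\begin{displaymath}
\frac{\exp\left(\frac{\epsilon Q(\mathbf{B}, p)}{2\Delta Q}\right)}{\exp\left(\frac{\epsilon Q(\mathbf{B'}, p)}{2\Delta Q}\right)} \leq \exp\left(\frac{\epsilon}{2}\right).
\end{displaymath}
For the normalizer I would apply the same sensitivity bound termwise: each summand $\exp(\epsilon Q(\mathbf{B'}, p')/(2\Delta Q))$ is at most $\exp(\epsilon/2)$ times $\exp(\epsilon Q(\mathbf{B}, p')/(2\Delta Q))$, so the entire partition sum under $\mathbf{B'}$ is bounded by $\exp(\epsilon/2)$ times the partition sum under $\mathbf{B}$. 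Multiplying the two $\exp(\epsilon/2)$ factors yields the desired bound $\exp(\epsilon)$ for the single-output ratio.

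Finally, to pass from single outputs to an arbitrary $S \subseteq \mathrm{Range}(\mathcal{M})$, I would sum the pointwise inequality over $p \in S$, since $\sum_{p \in S} Pr[\mathcal{M}(\mathbf{B}) = p] \leq \exp(\epsilon) \sum_{p \in S} Pr[\mathcal{M}(\mathbf{B'}) = p]$ follows directly from the per-output bound. The one point demanding care is the direction of the normalizer inequality: one must bound the $\mathbf{B'}$ partition sum by the $\mathbf{B}$ partition sum (rather than the reverse), so that it lands in the numerator of the overall ratio. Once this bookkeeping is arranged symmetrically, the argument is routine and presents no genuine obstacle, which is why the statement can be quoted directly from \cite{dwork2014algorithmic}.
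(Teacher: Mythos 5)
Your proposal is correct and follows essentially the same route as the paper: the paper quotes this theorem from the literature without proof, but its own proof that DPCA achieves $\epsilon$-differential privacy is exactly your argument instantiated with $Q = REV$ and $\Delta = m \cdot q_{max} v_{max}$ --- factor the probability ratio into a numerator term and a partition-sum term, bound each by $\exp(\epsilon/2)$ via the sensitivity, and multiply. Your additional care about the direction of the normalizer inequality and the summation over $p \in S$ matches the paper's (implicit) pointwise treatment, so there is nothing substantive to distinguish the two arguments.
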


\begin{mylemma}[Composition \cite{dwork2014algorithmic}]\label{the:composition}
Given the randomized algorithms $\mathcal{M}_1$, $\mathcal{M}_2$,..., $\mathcal{M}_k$ that satisfies $\epsilon_1$-differential privacy, $\epsilon_2$-differential privacy,..., $\epsilon_k$-differential privacy, respectively. Then $\mathcal{M}(D)=(\mathcal{M}_1(D), \mathcal{M}_2(D),..., \mathcal{M}_k(D))$ satisfies $(\sum_{i=1}^k \epsilon_i)$-differential privacy.
\end{mylemma}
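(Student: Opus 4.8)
The plan is to reduce the general $k$-fold statement to the two-mechanism case and then settle that case by a pointwise likelihood-ratio argument. First I would observe that it suffices to prove the claim for $k=2$: writing $\mathcal{M}(D)=\bigl((\mathcal{M}_1(D),\ldots,\mathcal{M}_{k-1}(D)),\,\mathcal{M}_k(D)\bigr)$, I treat the first $k-1$ components as a single mechanism that is $(\sum_{i=1}^{k-1}\epsilon_i)$-differentially private by the induction hypothesis, and then compose it with $\mathcal{M}_k$ to obtain $(\sum_{i=1}^{k}\epsilon_i)$-differential privacy. The whole result then follows by induction on $k$ from the base case.

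For the base case, fix two bid profiles $\mathbf{B}$ and $\mathbf{B'}$ differing in a single bid, and an arbitrary outcome $o=(o_1,o_2)$ in the joint range. Since the mechanisms draw independent randomness, the joint output probability factorizes into the product of the marginals, so the likelihood ratio splits:
\begin{displaymath}
\frac{Pr[\mathcal{M}(\mathbf{B})=o]}{Pr[\mathcal{M}(\mathbf{B'})=o]}=\frac{Pr[\mathcal{M}_1(\mathbf{B})=o_1]}{Pr[\mathcal{M}_1(\mathbf{B'})=o_1]}\cdot\frac{Pr[\mathcal{M}_2(\mathbf{B})=o_2]}{Pr[\mathcal{M}_2(\mathbf{B'})=o_2]}.
\end{displaymath}
Applying the $\epsilon_1$- and $\epsilon_2$-differential privacy guarantees of $\mathcal{M}_1$ and $\mathcal{M}_2$ to each factor bounds this product by $\exp(\epsilon_1)\cdot\exp(\epsilon_2)=\exp(\epsilon_1+\epsilon_2)$.

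To lift this pointwise bound to an arbitrary output set $S$, I would sum the inequality $Pr[\mathcal{M}(\mathbf{B})=o]\le\exp(\epsilon_1+\epsilon_2)\,Pr[\mathcal{M}(\mathbf{B'})=o]$ over all $o\in S$, which yields $Pr[\mathcal{M}(\mathbf{B})\in S]\le\exp(\epsilon_1+\epsilon_2)\,Pr[\mathcal{M}(\mathbf{B'})\in S]$, exactly the definition of $(\epsilon_1+\epsilon_2)$-differential privacy.

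The step I expect to be the main obstacle is justifying the factorization of the joint distribution, i.e. making precise the independence on which it rests. The clean product form holds when the mechanisms use independent random coins; if a later mechanism were allowed to adapt to the realized output of an earlier one, I would instead condition on that output and argue that the adapted mechanism still satisfies its privacy guarantee for every fixed value of the conditioning variable, so the same factored bound survives. I would also note that for continuous output ranges the argument goes through verbatim after replacing the probability masses by densities and the sum over $S$ by an integral, though for the discrete unit-price ranges used in this paper the summation argument is enough.
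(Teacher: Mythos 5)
Your proof is correct, but note that the paper does not prove this lemma at all---it is stated as a known result with a citation to Dwork and Roth---and your induction-to-$k=2$ reduction followed by the pointwise likelihood-ratio factorization is precisely the standard textbook argument that the citation points to. Your caveats are also the right ones: the factorization needs independent coins (with the conditioning argument you sketch covering the adaptive case), and for the discrete price ranges used in this paper the pointwise bound $Pr[\mathcal{M}(\mathbf{B})=o]\le \exp(\epsilon_1+\epsilon_2)\,Pr[\mathcal{M}(\mathbf{B'})=o]$ follows from applying the differential-privacy definition to singleton sets, so summing over $S$ closes the argument with no gap.
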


\section{Differentially Private Combinatorial Cloud Auction (DPCA)}\label{sec:dpca}%:Differentially Private Combinatorial Cloud Auction For Price Combination

In this section, we elaborate our differentially private mechanism for combinatorial cloud auctions, DPCA, and provide theoretical analysis of its related properties.

\subsection{Design Rationales}

We integrate the concept of differential privacy with the problem of heterogeneous cloud resource allocation, and design a combinatorial cloud auction mechanism which preserves differential privacy, achieves truthfulness, and realizes good revenues. Two main challenges and their corresponding design rationales are as follows.

%To achieve efficient virtual machines allocation and privacy protection, the mechanism we designed integrates the exponential mechanism with the combinatorial cloud auction. In our scheme, we adopt the concept of the price combination. Basically, the design rationale of DPCA is to randomly select a unit price vector from the set of all possible unit price combinations, and then according to the pricing strategy, determine the winners who satisfy the instance constraint. Specifically, the probability distribution of unit price vectors is set according to the exponential mechanism.

The first challenge is how to design a pricing strategy to ensure reasonable pricing, which is good for achieving the property of truthfulness and yielding high auction revenues. Our observation is that, for heterogeneous cloud auctions, the total payment of each winner is dependent on both unit prices and numbers of VMs, and thus the pricing for different users should have similar dependence. Based on this observation, we apply a two-level pricing. The first level pricing is a vector of all unit prices, providing a common pricing; The second level is total prices for users computed based on both unit prices and numbers of VMs, providing user-related pricing. Then the payments of users is proportional to the numbers of VMs they requested, which is easy for users to accept.

%The second challenge is how to design the VM allocation procedure to achieve both differential privacy and truthfulness.
Since there is a risk of privacy disclosure in submitting the true valuations, how to design the VM allocation procedure to achieve both differential privacy and truthfulness is the second challenge. To address this challenge, we first apply the exponential mechanism to select a unit price vector, achieving the differential privacy. Then, to achieve truthfulness, we design a random VM allocation algorithm by first selecting a winner candidate set through the unit price vector, and then allocating VMs to winner candidates in a random order with instance constraints. Here, instance constraints refer to the conditions that all numbers of VMs allocated should be no more than those provided.

\subsection{Detailed Design}
In the design, the domain of each unit price $\rho_i$ of $VM_i$ ($0 \le i \le m$) is defined as $\Pi=[v_{min}..v_{max}]$, where each unit price $\rho_i$ can take all the different possible valuation/bid values in $\Pi$. As a result, the unit clearing prices can be represented by a vector $\mathbf{\rho}=(\rho_1,\rho_2,...,\rho_m) \in {\Pi}^m$. The procedure of DPCA can be described in the following two phases.

\textbf{(1) Price Vector Selection.} In this phase, DPCA applies an instance of the exponential mechanism to select the clearing price vector, where revenues are used as the utility function. This phase can be further divided into three steps as follows.

\emph{(a) Winner Candidate Selection.}This step selects winner candidates through price comparison. Specifically, given the clearing price vector $\rho \in {\Pi}^m$, the clearing price of user $j$ can be computed as
\begin{equation}
P_j=\sum_{i=1}^m k_j^i \rho_i
\end{equation}
Then user $j$ is selected as a candidate if
\begin{equation}\label{eq:candidate3}
{\overline{B}_j}\geq {P_j}
\end{equation}

%This step selects winner candidates in term of bid densities as defined in \cite{Wang2012When}, given the clearing price vector as $\rho$. Specifically, user $j$ is selected as a candidate if
%\begin{equation}\label{eq:candidate1}
%\frac{\sum_{i=1}^m k_j^i b_j^i}{\sqrt{\sum_{i=1}^m k_j^i w_i}} \ge \frac{\sum_{i=1}^m k_j^i \rho_i}{\sqrt{\sum_{i=1}^m k_j^i w_i}}
%\end{equation}
%That is
%\begin{equation}\label{eq:candidate2}
%\sum_{i=1}^m k_j^i b_j^i \ge \sum_{i=1}^m k_j^i \rho_i
%\end{equation}
%Since the left side of Eq.~\eqref{eq:candidate2} is user $j$'s total bid $\overline{B_j}$, and the right side is actually its clearing price $P_j$, we can rewrite the condition of winner candidate selection as Eq.~\eqref{eq:candidate3}.
%\begin{equation}\label{eq:candidate3}
%{\overline{B_j}}\geq {P_j}
%\end{equation}

Through this step, the users, whose total bids are not less than their corresponding total payments at price $\rho$, are selected as winner candidates. We denote the set of candidates by $\mathcal{W}^c$ and assume that $h = |\mathcal{W}^c|$.

\emph{(b) Random VM Allocation.} In this step, the CSP sorts these $h$ candidates randomly, such that whether a candidate is a winner is totally independent of its bid. This randomness is important for the achievement of truthfulness as we will discuss later in the theoretical proofs. However, the randomness is actually predetermined by the auctioneer before the auction. We adopt the idea of the matching random string \cite{8676063} to ensure that the revenues corresponding to all clearing prices are unique, and hence revenues can be used as a utility function for the exponential mechanism. Here we use $r$ to represent the random string. Specially, $r$ is a sufficiently long bit string that can determine an order of any candidate set for all possible price vectors. And each part of $r$ indicates the random sort of candidates of a certain price vector $\rho$.
Then given a price vector $\rho$ as the clearing price vector, and an allocation order represented by $r$, our mechanism chooses the first user that satisfies the instance constraint (e.g., the VM instances requested by the user does not exceed the amount of VM instances provided by the provider) as the winner while updating the instance constraint, and then chooses the next winner as above. The iteration continues until all users in $\mathcal{W}^c$ have been examined, and the winner set $\mathcal{W}$ is obtained. Then the corresponding revenue of the CSP can be calculated as follows.
\begin{equation}\label{eq:revenue}
REV(\mathbf{B}, \mathbf{K}, r,\rho)=\sum_{j \in \mathcal{W}} \sum_{i=1}^m \rho_i k_j^i
\end{equation}
%For each $\rho \in {\Pi}^m$, the mechanism can identify the potential winners and the corresponding revenue based on the above process.
%The remaining task is to pick a suitable final clearing price $\rho$ from the set ${\Pi}^m$. As once it is determined, so do the winners.

\emph{(c) Probabilistic Price Selection.} Steps (a) and (b) are carried out given a price vector $\rho$. For the achievement of privacy protection and better revenue, we introduce the exponential mechanism to select the price $\rho$ with revenues of the CSP as its utility function. Thus, this step needs to repeatedly perform the previous two steps to get the revenues corresponding to all possible price vectors $\rho$, given a random string $r$. Then the probability distribution of clearing price vectors can be computed as follows.
\begin{equation}\label{eq:probability}
Pr(\mathcal{M}(\mathbf{B}, \mathbf{K}, r) = \rho)=\frac{{\exp}(\frac{\epsilon REV(\mathbf{B}, \mathbf{K}, r, \rho)}{2\Delta})}{\sum_{\rho' \in {\Pi}^m}{\exp}(\frac{\epsilon REV(\mathbf{B}, \mathbf{K}, r, \rho')}{2\Delta})}
\end{equation}
where $\epsilon$ is the privacy budget, and $\Delta=m \cdot q_{max} v_{max}$, i.e., the possible maximum change in revenue that alters one's bid.

Based on the probability distribution \eqref{eq:probability}, DPCA randomly chooses a final clearing unit price vector $\rho$. Algorithm~\ref{protocol:DPCA} describes the procedure in details.

\textbf{(2) Winner Computation.}
Given a random string r, once the final clearing price vector $\rho$ is chosen, the winners will be determined accordingly. Then each winner $j$ will pay the CSP $P_j=\sum_{i=1}^m \rho_i k_j^i$. And the final revenue of the CSP is $REV(\mathbf{B}, \mathbf{K}, r, \rho)$.

\renewcommand{\algorithmicrequire}{\textbf{Input:}}  %Use Input in the format of Algorithm
\renewcommand{\algorithmicensure}{\textbf{Output:}}  %UseOutput in the format of Algorithm

\begin{algorithm}[htb]
\caption{Price Vector Selection} \label{protocol:DPCA}
\begin{algorithmic}[1]

\REQUIRE $\mathbf{B}$, $\mathbf{K}$, $\mathbb{N}$, $r$ and privacy budget $\epsilon$.

\ENSURE Final clearing price $\rho$

%(1) \underline{Pricing Combination:}
\STATE Initialize $x_1=...=x_n=0$, $\mathcal{W} \leftarrow \varnothing$, $\mathcal{W}^c \leftarrow \varnothing$
\STATE $\rho=(\rho_1,\rho_2,...,\rho_m) \in \Pi^m$

\FOR{all $\rho \in \Pi^m$}
\FOR{$j=1,...,n$}
\IF{$\sum_{i=1}^m k_j^i b_j^i \geq \sum_{i=1}^m k_j^i \rho_i$}

\STATE $\mathcal{W}^c \leftarrow \mathcal{W}^c \cup j$
\ENDIF
\ENDFOR

\STATE  Rank the $h$ candidates in $\mathcal{W}^c$ according to $r$

\FOR{$j=1,...,h$}
\IF{$\forall i \in \{1,...,m\}, \sum_{t=1}^{j-1}k_t^i x_t+k_j^i\leq K_i$}

\STATE $x_j=1, \mathcal{W} \leftarrow \mathcal{W} \cup j$

%\STATE $K_1\leftarrow K_1-k_j^1,...,K_m\leftarrow K_m-k_j^m$
\ENDIF
\ENDFOR

\STATE Calculate the revenue of the CSP :

$REV(\mathbf{B}, \mathbf{k}, r, \rho)\leftarrow \sum_{j \in \mathcal{W}} \sum_{i=1}^m \rho_i k_j^i x_j$

\ENDFOR

\STATE Randomly select a final clearing price $\rho$ from $\Pi^m$ according to the following distribution:

$Pr(\rho)=\frac{{\exp}(\frac{\epsilon REV(\mathbf{B}, \mathbf{k}, r, \rho)}{2\Delta})}{\sum_{\rho' \in {\Pi}^m}{\exp}(\frac{\epsilon REV(\mathbf{B}, \mathbf{k}, r, \rho')}{2\Delta})}$

%\STATE Determine winners $\mathcal{W}$

\RETURN $\rho$

\end{algorithmic}
\end{algorithm}

\subsection{Analysis}
In this subsection, we focus on analyzing and proving desirable properties of DPCA, namely its privacy (Theorem~\ref{the:dp}), truthfulness (Theorem~\ref{the:gamma-truthfulness}), and its high revenue (Theorem~\ref{the:OPT}). In the following analysis, for the changes in the bid profile $B_j$ of one user, we take into account the variations of both per-instance bids $b_j^i$ and numbers of VMs $k_j^i$ requested. The inputs of our auction mechanism include the profile of all types of instances and the randomness $r$. But these input values are provided by the CSP and do not change, so we will not write them out explicitly in expressions in the following proof unless otherwise stated.

\begin{mytheorem}\label{the:dp}
DPCA achieves $\epsilon$-differential privacy.
\end{mytheorem}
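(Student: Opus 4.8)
The plan is to recognize that the Price Vector Selection phase of DPCA is, by construction, a single invocation of the exponential mechanism: its candidate range is $\Pi^m$, its utility function is the revenue $REV(\mathbf{B},\mathbf{K},r,\rho)$ of \eqref{eq:revenue}, and its sampling law is exactly the distribution \eqref{eq:probability}. Since the exponential mechanism is known to be $\epsilon$-differentially private whenever the normalizing constant $\Delta$ used in \eqref{eq:probability} is a valid upper bound on the sensitivity of the utility, the whole statement reduces to two points. First, I would note that the only randomized, bid-dependent output is $\rho$: the subsequent Winner Computation phase determines the winners and the payments $P_j=\sum_{i=1}^m\rho_i k_j^i$ as a deterministic function of $\rho$ together with the public, bid-independent quantities $r$ and $\mathbf{K}$, so releasing them reveals nothing beyond $\rho$ and cannot weaken the guarantee. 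Second, and this is the substance, I must verify that $\Delta=m\cdot q_{max}v_{max}$ really bounds the sensitivity of $REV$ under a one-user change.

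For the sensitivity step I would fix an arbitrary price vector $\rho\in\Pi^m$, the order string $r$, and $\mathbf{K}$, and take two neighbors $\mathbf{B},\mathbf{B'}$ that differ only in the bid profile $B_{j_0}=\langle k_{j_0},b_{j_0}\rangle$ of a single user $j_0$. The candidate test \eqref{eq:candidate3} for user $j$ depends only on $B_j$, so the candidate sets satisfy $\mathcal{W}^c\,\triangle\,\mathcal{W}^{c'}\subseteq\{j_0\}$, and the only winner whose resource footprint $k_{j_0}^i$ can change is $j_0$ itself. Thus the two greedy allocations are produced by the \emph{same} order $r$ on candidate lists that differ only through the presence (and footprint) of $j_0$. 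The target inequality is $|REV(\mathbf{B},\rho)-REV(\mathbf{B'},\rho)|\le\sum_{i=1}^m\rho_i q_{max}\le m\,v_{max}q_{max}=\Delta$, which would follow immediately if the winner set moved only by $j_0$, since any single winner pays at most $\sum_{i=1}^m\rho_i k_j^i\le m\,v_{max}q_{max}$.

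I expect the cascade induced by the instance constraints to be the main obstacle. Inserting or deleting $j_0$ in the fixed greedy order can, through the constraint $\sum_{t<j}k_t^i x_t+k_j^i\le K_i$, flip the win/lose status of \emph{later} candidates, so a priori the revenue could move by more than one winner's payment; a naive ``one user changes'' argument is therefore not sufficient. My plan to control this is to couple the two greedy runs step by step along $r$ and track, for every VM type $i$, the difference in cumulative allocated instances between the two runs. The goal is to show this per-type load gap never exceeds $q_{max}$: past the point where the two runs diverge, any rejected candidate requests at most $q_{max}$ instances of the binding type, which keeps each run within a $q_{max}$-window of saturation on that type, while the assumption $q_{max}\ll K_i$ keeps the capacities comparable so that a $q_{max}$-sized change consumed or freed in one type translates into a bounded change in the others. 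Summing $\rho_i$ times the per-type gap then yields the bound $\Delta$. Establishing this telescoping/exchange argument rigorously — that the downstream cascade cancels to within one user's maximum payment — is the crux; once it is in place, the rest is bookkeeping and the exponential-mechanism guarantee of \cite{dwork2014algorithmic} closes the proof.
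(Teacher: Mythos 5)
You correctly identify the overall structure: DPCA's price selection is one invocation of the exponential mechanism over $\Pi^m$ with utility $REV$, and $\epsilon$-DP follows from the standard guarantee \emph{provided} $\Delta = m\,q_{max}v_{max}$ bounds the sensitivity of $REV$ under a one-user change. Be aware, though, that the paper's own proof is much shorter than your plan: it is only the standard ratio computation, bounding $\exp\bigl(\epsilon\,REV(\mathbf{B},\rho)/2\Delta\bigr)/\exp\bigl(\epsilon\,REV(\mathbf{B'},\rho)/2\Delta\bigr)$ by $\exp(\epsilon/2)$ and handling the normalizer symmetrically, with the sensitivity bound simply \emph{asserted} in the definition of $\Delta$ (``the possible maximum change in revenue that alters one's bid''). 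The cascade through the instance constraints that you single out as the crux is never discussed in the paper. So your instinct that the sensitivity claim is the real mathematical content is more careful than the paper itself — but your plan for discharging it has a genuine gap.

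The invariant you propose — that the per-type cumulative load gap between the two coupled greedy runs never exceeds $q_{max}$ — is true for $m=1$ (your window argument works there: a divergent acceptance in one run forces that run's load on the requested type to be strictly smaller, so the gap stays in $(-q_{max}, q_{max}]$), but it is \emph{false} for $m \ge 2$. A candidate accepted in only one run needs just a single type with a favorable gap to be blocked in the other run, yet its acceptance injects its full footprint into \emph{all} types. Concretely: let $m=2$, $K_1=K_2=100$, $q_{max}=10$; a common prefix of candidates fills type $1$ to load $90$ in both runs; $j_0$ (a candidate only under $\mathbf{B}$) requests $(10,0)$ and saturates type $1$ in run A; then ten successive candidates each requesting $(1,10)$ are rejected in run A (type $1$ overflows by one unit) but accepted in run B. Run B ends with $100$ extra type-$2$ instances allocated, so $|REV(\mathbf{B},\rho)-REV(\mathbf{B'},\rho)|$ can reach about $100\,\rho_2 \approx 10\, q_{max} v_{max} \gg \Delta = 2\,q_{max} v_{max}$, and $q_{max}\ll K_i$ does not prevent this. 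Hence the telescoping/exchange argument cannot be established as stated, and in fact the paper's asserted $\Delta$ is itself questionable for this allocation rule — a point the paper's proof silently skips. One further inaccuracy in your first step: Winner Computation is not bid-independent post-processing of $\rho$, since the candidate test $\overline{B}_j \ge P_j$ uses the bids; the paper's theorem likewise only covers the released price vector, not the full published outcome.
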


\begin{proof}
We denote $\mathbf{B}$ as the profile of bids for all users, and denote any neighboring profile by $\mathbf{B'}$, which is only differing in the bid profile $B_j$ of one user from $\mathbf{B}$. Let $\mathcal{M}$ denote the mechanism we designed to randomly select the clearing price vector. For any $\rho \in {\Pi}^m$ and any pair of profiles $\mathbf{B}$ and $\mathbf{B'}$, the probability ratio of the clearing price vector selected by the mechanism $\mathcal{M}$ is

\begin{displaymath}
\begin{aligned}
&\frac{Pr(\mathcal{M}(\mathbf{B})=\rho)}{Pr(\mathcal{M}(\mathbf{B'})=\rho)}\\
=&\frac{\frac{{\exp}(\frac{\epsilon REV(\mathbf{B},
\rho)}{2\Delta})}{\sum_{\rho'\in {{\Pi}^m}}{\exp}(\frac{\epsilon REV(\mathbf{B}, \rho')}{2\Delta})}}{\frac{{\exp}(\frac{\epsilon REV(\mathbf{B'}, \rho)}{2\Delta})}{\sum_{\rho'\in {{\Pi}^m}}{\exp}(\frac{\epsilon REV(\mathbf{B'}, \rho')}{2\Delta})}}\\
=&(\frac{{\exp}(\frac{\epsilon REV(\mathbf{B}, \rho)}{2\Delta})}{{\exp}(\frac{\epsilon REV(\mathbf{B'}, \rho)}{2\Delta})})(\frac{{\sum_{\rho'\in {{\Pi}^m}}{\exp}(\frac{\epsilon REV(\mathbf{B'}, \rho')}{2\Delta})}}{{\sum_{\rho'\in {{\Pi}^m}}{\exp}(\frac{\epsilon REV(\mathbf{B}, \rho')}{2\Delta})}})\\
\leq& \exp(\frac{\epsilon}{2})(\frac{{\sum_{\rho'\in {{\Pi}^m}}\exp(\frac{\epsilon}{2}){\exp}(\frac{\epsilon REV(\mathbf{B}, \rho')}{2\Delta})}}{{\sum_{\rho'\in {{\Pi}^m}}{\exp}(\frac{\epsilon REV(\mathbf{B}, \rho')}{2\Delta})}})\\
\leq & \exp(\frac{\epsilon}{2})\exp(\frac{\epsilon}{2})(\frac{{\sum_{\rho'\in {{\Pi}^m}}{\exp}(\frac{\epsilon REV(\mathbf{B}, \rho')}{2\Delta})}}{{\sum_{\rho'\in {{\Pi}^m}}{\exp}(\frac{\epsilon REV(\mathbf{B}, \rho')}{2\Delta})}})\\
=& \exp(\epsilon)
\end{aligned}
\end{displaymath}

Symmetrically, we have $\exp(-\epsilon) \le \frac{Pr(\mathcal{M}(\mathbf{B})=\rho)}{Pr(\mathcal{M}(\mathbf{B'})=\rho)}$. Thus, DPCA achieves $\epsilon$-differential privacy.
 %$\Box$
\end{proof}

\begin{mylemma}\label{the:truthfulness}
Given a random string $r$ and a clearing price vector $\rho \in \Pi^m$, any user $j$ cannot improve its utility by bidding untruthfully, i.e., $u_j(V_j, \mathbf{B_{-j}})\geq u_j(B_j, \mathbf{B_{-j}})$.
\end{mylemma}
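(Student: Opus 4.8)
The plan is to establish a conditional, ex-post form of truthfulness: once the random string $r$ and the price vector $\rho$ are held fixed, the mechanism reduces for user $j$ to a deterministic take-it-or-leave-it rule, and I will show that truthful bidding weakly dominates every deviation. The first step is to note that, under the single-minded convention where $k_j$ is $j$'s genuine demand and only the reported bids $b_j^i$ are strategic, the clearing price $P_j=\sum_{i=1}^m k_j^i\rho_i$ depends only on $k_j$ and the fixed $\rho$, hence is completely independent of the per-instance bids that $j$ reports. Therefore the reported total bid $\overline{B}_j$ enters the whole procedure through exactly one binary channel, the candidate test $\overline{B}_j\geq P_j$ of \eqref{eq:candidate3}: $j$ either joins $\mathcal{W}^c$ or it does not, and its bid has no other effect.

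The second step is to argue that, conditioned on $j$ being a candidate, whether $j$ ends up a winner is independent of the value of $\overline{B}_j$. With $\mathbf{B_{-j}}$, $\rho$, and $r$ fixed, the priority order induced by the matching random string is predetermined and bid-independent, so inserting or deleting $j$ from $\mathcal{W}^c$ leaves the relative order of all other candidates intact. Consequently the set of candidates examined before $j$ in the winner loop, and hence the VM instances they consume, are fixed quantities; the instance-constraint test applied to $j$ in Algorithm~\ref{protocol:DPCA} then returns the same truth value no matter how $j$ bids, as long as $j$ is a candidate at all. I will write $C$ for this fixed event ``$j$ wins whenever it is a candidate'', which is determined entirely by $(\rho,r,\mathbf{B_{-j}},k_j,\mathbf{K})$.

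These two reductions collapse $j$'s utility to the closed form $u_j=(\overline{V}_j-P_j)\,x_j$ with $x_j=\mathbb{1}[\overline{B}_j\geq P_j]\,\mathbb{1}[C]$, so the proof finishes with a short case analysis on the sign of $\overline{V}_j-P_j$. If $\overline{V}_j\geq P_j$, truthful bidding makes $j$ a candidate and gives $u_j=(\overline{V}_j-P_j)\mathbb{1}[C]\geq 0$; any deviation either leaves $j$ a candidate (identical outcome) or drops $j$ out with utility $0$, so it cannot help. If $\overline{V}_j<P_j$, truthful bidding keeps $j$ out with utility $0$, whereas the only outcome-changing deviation raises $\overline{B}_j$ above $P_j$ and, when $C$ holds, produces the strictly negative utility $\overline{V}_j-P_j<0$; again truthfulness is weakly optimal. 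Combining both cases yields $u_j(\overline{V}_j,\mathbf{B_{-j}})\geq u_j(\overline{B}_j,\mathbf{B_{-j}})$.

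The hard part will be the second step, namely justifying rigorously that $j$'s winning status, conditioned on candidacy, is invariant to its bid. This rests entirely on the precise semantics of $r$: I must argue that $r$ fixes a bid-independent relative ordering of the users, so that the prefix of candidates preceding $j$ — and thus the residual VM capacity seen when $j$ is examined — does not change when $j$ alters its report. Making this ordering property explicit is exactly what decouples $j$'s acceptance from its valuation and renders the mechanism manipulation-proof for fixed $(\rho,r)$; I would therefore state it carefully before invoking it in the case analysis.
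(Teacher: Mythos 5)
There is a genuine gap: your deviation space is strictly smaller than the lemma's. In the paper the report is the full profile $B_j=\langle k_j,b_j\rangle$, and the analysis section states explicitly that the proof must ``take into account the variations of both per-instance bids $b_j^i$ and numbers of VMs $k_j^i$ requested.'' Your ``single-minded convention,'' which pins $k_j$ to the genuine demand and lets only $b_j^i$ be strategic, assumes away exactly the case where your reduction breaks. Once $k_j$ is reportable, the clearing price $P_j=\sum_{i=1}^m k_j^i\rho_i$ is computed from the \emph{reported} quantities, so it is no longer independent of $j$'s report; the candidate test \eqref{eq:candidate3} and the instance-constraint test in Algorithm~\ref{protocol:DPCA} both move with the report, and the claim that $\overline{B}_j$ enters through ``exactly one binary channel'' fails. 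The paper closes this case with two extra modeling assumptions that you would also need to state: users may only increase requested quantities (a decrease would not satisfy their demand), and the additional instances carry zero utility. Under these, inflating the request to ${k_j^i}'>k_j^i$ raises the payment to $\sum_{i=1}^m {k_j^i}'\rho_i>\sum_{i=1}^m k_j^i\rho_i$ while the valuation stays at $\overline{V}_j$, and may additionally cause the instance constraint to fail; in every subcase the deviator's utility weakly decreases. Without this case your argument proves a weaker statement than Lemma~\ref{the:truthfulness}.

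Within the bid-only regime, your argument is correct and is in fact a cleaner packaging of the paper's proof: the closed form $u_j=(\overline{V}_j-P_j)\,x_j$, with $x_j$ the conjunction of the candidacy test and your bid-independent event $C$, compresses the paper's Cases 1--3 (restricted to bid deviations) into a posted-price case analysis, and your ``hard part'' is exactly the property the paper assigns to the matching random string --- its Case 2 asserts precisely that, given $r$, a candidate's winner status cannot be changed by rebidding. Your justification of $C$ also goes through as planned: other users' candidacy depends only on their own bids and the fixed $\rho$, so across all of $j$'s candidate-preserving reports the candidate set is literally the same set, hence the $r$-induced order, the prefix of candidates examined before $j$, and the residual capacity $j$ faces are all fixed. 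The needed repair is therefore local --- add the quantity-misreport case with the two assumptions above --- not a change of strategy.
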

\begin{proof}
In the following proof, we use $x_j^*$ and $u_j^*$ to represent the allocation and utility of user $j$ when his true valuation is made, i.e., $B_j=V_j$.
%And let $u_j^*$ denote the utility of user $j$ for bid $B_j$ with truthful information, respectively.
If $B_j\neq V_j$, it means that some per-instance bids or some numbers of VMs requested are different from true valuation's, then $\overline{B}_j\neq \overline{V}_j$. Here we consider that users can only increase the number of VM requested, since when the number of VM decrease, the demand of users cannot be satisfied. Furthermore, we assume that the additional VMs have a utility 0. The main proof falls into several cases below.

\begin{itemize}
    \item \emph{Case 1 ($\overline{V}_j < P_j$ and $x_j^*=0$)}: We first consider the change in per-instance bid. If $b_j^i < v_j^i$, then $\sum_{i=1}^m k_j^i b_j^i< \sum_{i=1}^m k_j^i \rho_i$ (i.e., $\overline {B}_j < P_j$) and he also loses ($u_j=u_j^*=0$); else if $b_j^i > v_j^i$ but $\sum_{i=1}^m k_j^i b_j^i < \sum_{i=1}^m k_j^i \rho_i$, he still does not win; else $b_j^i > v_j^i$ and $\sum_{i=1}^m k_j^i b_j^i > \sum_{i=1}^m k_j^i \rho_i > \sum_{i=1}^m k_j^i v_j^i$, then he wins and his utility is $u_j=\sum_{i=1}^m k_j^i v_j^i-\sum_{i=1}^m k_j^i \rho_i < 0$. For increasing the number of requested VM with true valuation, if $\sum_{i=1}^m {k_j^i}' b_j^i < \sum_{i=1}^m {k_j^i}' \rho_i$ (i.e., $\overline {B}_j <  {P_j}'$) or $\sum_{i=1}^m {k_j^i}' b_j^i > \sum_{i=1}^m {k_j^i}' \rho_i$ but not satisfies the instance constraint, then $u_j=0$; else $\sum_{i=1}^m {k_j^i}' b_j^i > \sum_{i=1}^m {k_j^i}' \rho_i$ (i.e., $\overline {B_j} > {P_j}'$) and he wins, since when increasing the numbers of VMs, the utility of additional VMs is 0, and the total valuation does not increase (i.e., $\sum_{i=1}^m {k_j^i}' {v_j^i}' = \sum_{i=1}^m {k_j^i} v_j^i  $). So his utility is $u_j=\sum_{i=1}^m k_j^i v_j^i-\sum_{i=1}^m {k_j^i}' \rho_i < \sum_{i=1}^m k_j^i v_j^i-\sum_{i=1}^m {k_j^i} \rho_i <0$.

    \item \emph{Case 2~($\overline {V}_j \geq P_j$ and $x_j^*=0$)}: It means the requested VM instances does not satisfy the instance constraint. So given the random string $r$, no matter how the per-instance bid or the number of VM requested changes, he still won't win. Therefore, $u_j^*=u_j=0$.

    \item \emph{Case 3~($\overline {V}_j \geq P_j$  and $x_j^*=1$)}: If $b_j^i<v_j^i$ or $b_j^i>v_j^i$, and $\sum_{i=1}^m k_j^i b_j^i \geq \sum_{i=1}^m k_j^i \rho_i$, then his utility stays the same as the true valuation, $i.e., u_j=u_j^*$; else if $b_j^i<v_j^i$ and $\sum_{i=1}^m k_j^i b_j^i < \sum_{i=1}^m k_j^i \rho_i$, then he loses and his utility is $u_j=0<u_j^*$; else if the user $j$ increases the number of VM requested ($i.e., {k_j^i}'>k_j^i$), then $\sum_{i=1}^m {k_j^i}' b_j^i > \sum_{i=1}^m {k_j^i}' \rho_i$, he wins and his utility is $u_j=\sum_{i=1}^m k_j^i v_j^i - \sum_{i=1}^m {k_j^i}' \rho_i < \sum_{i=1}^m k_j^i v_j^i - \sum_{i=1}^m k_j^i \rho_i=u_j^*$; else the increasing VM instances does not meet the instance constraint or $\sum_{i=1}^m {k_j^i}' v_j^i < \sum_{i=1}^m {k_j^i}' \rho_i$  , then $u_j=0<u_j^*$.

        %else if $\sum_{i=1}^m {r_j^i}' v_j^i > \sum_{i=1}^m {r_j^i}' \rho_i > \sum_{i=1}^m r_j^i v_j^i >\sum_{i=1}^m r_j^i \rho_i$, he also wins but his utility is $u_j=\sum_{i=1}^m r_j^i v_j^i - \sum_{i=1}^m {r_j^i}' \rho_i<0<u_j^*$; else $\sum_{i=1}^m {r_j^i}' v_j^i < \sum_{i=1}^m {r_j^i}' \rho_i$, he does not win and his utility is 0.
\end{itemize}

Therefore, it can be proved that no user can improve his utility by bidding untruthfully, i.e., $u_j(V_j, \mathbf{B_{-j}})\geq u_j(B_j, \mathbf{B_{-j}})$.

\end{proof}

\begin{mytheorem}\label{the:gamma-truthfulness}
Given a random string $r$, DPCA is $\gamma$-truthful, where $\gamma$ =$\epsilon \cdot m \cdot q_{max} v_{max}$.
\end{mytheorem}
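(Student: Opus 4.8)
The plan is to combine the per-price-vector optimality already established in Lemma~\ref{the:truthfulness} with the distributional stability guaranteed by the $\epsilon$-differential privacy of Theorem~\ref{the:dp}. Fix the random string $r$ and the other users' truthful reports $\mathbf{B_{-j}}=\mathbf{V_{-j}}$, and write $p(\rho)=Pr(\mathcal{M}(\overline{V}_j)=\rho)$ and $q(\rho)=Pr(\mathcal{M}(\overline{B}_j)=\rho)$ for the clearing-price distributions induced by truthful reporting and by a deviation, respectively. Let $u_j(V_j,\rho)$ and $u_j(B_j,\rho)$ be user $j$'s \emph{true} utility under clearing price $\rho$ when reporting truthfully and when deviating. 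The target inequality is $T\ge D-\gamma$, where $T=\sum_\rho p(\rho)\,u_j(V_j,\rho)$ and $D=\sum_\rho q(\rho)\,u_j(B_j,\rho)$. Two facts drive everything: (i) Lemma~\ref{the:truthfulness} gives $u_j(V_j,\rho)\ge u_j(B_j,\rho)$ for \emph{every} fixed $\rho$; and (ii) since the two reports differ in a single bid, Theorem~\ref{the:dp} yields $q(\rho)\le e^{\epsilon}p(\rho)$ for all $\rho$.

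First I would apply (i) pointwise but under the \emph{deviating} distribution, obtaining $D=\sum_\rho q(\rho)\,u_j(B_j,\rho)\le\sum_\rho q(\rho)\,u_j(V_j,\rho)$. This crucial step replaces the awkward deviating utility with the truthful one, which I can then show is non-negative and bounded. Indeed, when reporting truthfully a winner pays only $P_j=\sum_{i=1}^m\rho_i k_j^i\le\overline{V}_j$ by the candidate condition~\eqref{eq:candidate3}, so $u_j(V_j,\rho)=\overline{V}_j-P_j\ge0$, while a loser has utility $0$; and since $\overline{V}_j\le m\,q_{max}v_{max}$, we get $u_j(V_j,\rho)\in[0,U]$ with $U=m\,q_{max}v_{max}$. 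Non-negativity is exactly what lets the multiplicative privacy bound pass through: using (ii),
\[
\sum_\rho q(\rho)\,u_j(V_j,\rho)\le e^{\epsilon}\sum_\rho p(\rho)\,u_j(V_j,\rho)=e^{\epsilon}\,T .
\]

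Chaining these gives $D\le e^{\epsilon}T=T+(e^{\epsilon}-1)T$, and since $0\le T\le U$ we conclude $D\le T+(e^{\epsilon}-1)U$, i.e. $T\ge D-(e^{\epsilon}-1)U$. Under the first-order approximation $e^{\epsilon}-1\approx\epsilon$ valid for the small privacy budget $\epsilon$, this yields $\gamma=\epsilon\,U=\epsilon\cdot m\cdot q_{max}v_{max}$, matching the statement. (Strictly the constant is $(e^{\epsilon}-1)U$; I would note that it collapses to the claimed $\epsilon U$ in the small-$\epsilon$ regime that the privacy model assumes.)

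The main obstacle is that deviating perturbs \emph{both} the utility function and the sampling distribution simultaneously, so one cannot bound $T$ and $D$ term-by-term: the raw deviating utility $u_j(B_j,\rho)$ may be negative (see Case~1 of Lemma~\ref{the:truthfulness}), which would invalidate the multiplicative privacy inequality. The resolution, and the linchpin of the argument, is to neutralize the utility-function change \emph{first} via Lemma~\ref{the:truthfulness}, reducing the comparison to the single non-negative, $[0,U]$-bounded function $u_j(V_j,\cdot)$, and only \emph{then} invoke differential privacy on the two distributions. Establishing that $u_j(V_j,\cdot)\ge0$ directly from the candidate-selection rule is therefore the step I expect to require the most care, since it is precisely what makes the privacy bound applicable.
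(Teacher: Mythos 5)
Your proposal follows, in substance, exactly the paper's argument: Lemma~\ref{the:truthfulness} applied pointwise in $\rho$, the multiplicative stability $q(\rho)\le e^{\epsilon}p(\rho)$ from the differential-privacy theorem applied to the two clearing-price distributions, and the cap $U=m\,q_{max}v_{max}$ on utilities to close the bound; your observation that non-negativity of the truthful utility $u_j(V_j,\cdot)$ is what licenses passing the multiplicative factor through the expectation is also (implicitly) the linchpin of the paper's chain. The one genuine defect is your final step. Chaining in the direction $D\le e^{\epsilon}T\le T+(e^{\epsilon}-1)T\le T+(e^{\epsilon}-1)U$ proves the theorem only with the constant $\gamma'=(e^{\epsilon}-1)U$, which is \emph{strictly larger} than the claimed $\gamma=\epsilon U$ for every $\epsilon>0$, and invoking the ``first-order approximation $e^{\epsilon}-1\approx\epsilon$'' is not a proof of the stated constant. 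The paper avoids this entirely by chaining in the opposite direction: from $T\ge e^{-\epsilon}D$ it uses the exact inequality $e^{-\epsilon}\ge 1-\epsilon$, valid for all $\epsilon\ge 0$ with no approximation, to get $T\ge(1-\epsilon)D=D-\epsilon D\ge D-\epsilon U$, which is precisely $\gamma=\epsilon\cdot m\cdot q_{max}v_{max}$.

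Note that your own inequality $D\le e^{\epsilon}T$ already contains the fix: dividing by $e^{\epsilon}$ gives $T\ge e^{-\epsilon}D$, and then the $e^{-\epsilon}\ge1-\epsilon$ route applies. The only care needed is the sign of $D$: both the step $e^{-\epsilon}D\ge(1-\epsilon)D$ and the step $\epsilon D\le \epsilon U$ require $D\ge0$, and when $D<0$ the target inequality $T\ge D-\gamma$ is trivial because you have already shown $T\ge0$. (The paper's write-up silently assumes $D\ge0$ at this point, so your non-negativity analysis of the truthful utility --- which you correctly flagged as the delicate step --- in fact supplies the case distinction both arguments need.) With that one-line replacement of your final paragraph, your proof is complete and coincides with the paper's.
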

\begin{proof}
From Theorem~\ref{the:dp} and Lemma~\ref{the:truthfulness}, we can get $\exp(-\epsilon) Pr(\mathcal{M}(\mathbf{B'})=\rho) \le Pr(\mathcal{M}(\mathbf{B})=\rho)$ and $u_j(V_j, B_{-j}, r, \rho)\geq u_j(B_j, B_{-j}, r, \rho)$. Then the following formula holds,
%For users, the ranges of unit price and true valuation are $[v_{min}, v_{max}]$ and the requested instances is limited in the interval of $[0,q_i]$, and the utility of one user is $u_j=\sum_{i=1}^m r_j^i v_j^i-\overline{\rho} \sqrt{\sum_{i=1}^m r_j^i w_i}$. Hence the maximum utility of one is $u_{max}=\sum_{i=1}^m q_i v_{max}- \frac{\sum_{i=1}^m v_{min}}{\sqrt{\sum_{i=1}^m w_i}} \sqrt{\sum_{i=1}^m q_i w_i}$. From Theorem~\ref{the:dp} and Lemma~\ref{the:truthfulness}, we can get $\exp(-\epsilon) Pr(\mathcal{M}(\mathbf{B'})=\rho) \le Pr(\mathcal{M}(\mathbf{B})=\rho)$ and $u_j(V_j, B_{-j})\geq u_j(B_j, B_{-j})$. Then the following formula holds,
\begin{displaymath}\small
        \begin{aligned}
       \qquad & E[u_j(V_j, \mathbf{B_{-j}}, r, \rho)]\\
        =\quad&\sum_{\rho \in {\Pi}^m} Pr(r) \times Pr[\mathcal{M}(V_j, \mathbf{B_{-j}})=\rho] \times u_j(V_j, \mathbf{B_{-j}}, r, \rho)\\
        \geq \quad &exp(-\epsilon)\sum_{\rho \in {\Pi}^m} Pr(r) Pr[\mathcal{M}(B_j, \mathbf{B_{-j}})=\rho]\cdot u_j(B_j, \mathbf{B_{-j}}, r, \rho)\\
        =\quad&exp(-\epsilon)E[u_j(B_j, \mathbf{B_{-j}}, r, \rho)]\\
        \geq \quad &(1-\epsilon)E[u_j(B_j, \mathbf{B_{-j}}, r, \rho)]\\
         =\quad&E[u_j(B_j, \mathbf{B_{-j}}, r, \rho)]-\epsilon E[u_j(B_j, \mathbf{B_{-j}}, r, \rho)]
        %\geq \quad&E[u_j(B_j, \mathbf{B_{-j}}, \Gamma)]-\epsilon u_{max}\\
        \end{aligned}
\end{displaymath}

For any user $j \in \mathbb{N}$, the ranges of unit price and true valuation are $[v_{min}, v_{max}]$ and the requested instances are limited in the interval of $[0,q_{max}]$, and the utility of user $j$ is $u_j=\sum_{i=1}^m k_j^i v_j^i- \sum_{i=1}^m k_j^i \rho_i$. Then
\begin{displaymath}
\begin{aligned}
E[u_j(B_j, \mathbf{B_{-j}}, r, \rho)]
\leq\quad&\max u_j
\leq\quad&m \cdot q_{max} v_{max}
\end{aligned}
\end{displaymath}

So,
$E[u_j(V_j, \mathbf{B_{-j}}, r, \rho)]\geq E[u_j(B_j, \mathbf{B_{-j}}, r, \rho)]-\epsilon m q_{max} v_{max}$

It satisfies the definition of approximate truthfulness. Therefore, we have proved Theorem \ref{the:gamma-truthfulness}.
\end{proof}

\begin{mylemma}\label{the:OPT}
Let $OPT^*=max_{\rho \in \Pi^m} REV(\mathbf{B}, r, \rho)$ denote the maximum revenue of DPCA. And $OPT$ is defined the optimal revenue of the cloud auction, i.e., $OPT=max_{\rho \in \Pi^m, r \in \mathbb{R}} REV(\mathbf{B}, r, \rho)$, where $\mathbb{R}$ is the set of all possible random strings. Then $\frac{K_{min}-q_{max}+1}{K_{max}} OPT \leq OPT^* \leq OPT$.
\end{mylemma}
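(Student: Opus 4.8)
The plan is to prove the two inequalities separately, treating the upper bound as essentially definitional and putting all the effort into the lower bound. For $OPT^* \le OPT$, I would simply note that $OPT^*$ maximizes $REV(\mathbf{B}, r, \rho)$ over $\rho \in \Pi^m$ with the string $r$ held fixed, while $OPT$ maximizes the same expression over $\rho \in \Pi^m$ and additionally over all strings $r \in \mathbb{R}$. The feasible region for $OPT^*$ is a subset of that for $OPT$, so the inequality is immediate.

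For the lower bound, let $(\rho^{OPT}, r^{OPT})$ attain $OPT$, and write $n_i(\mathcal{W}) := \sum_{j \in \mathcal{W}} k_j^i$ for the number of type-$i$ instances sold to a winner set $\mathcal{W}$, so that $OPT = \sum_{i=1}^m \rho_i^{OPT}\, n_i(\mathcal{W}^{OPT})$. The central idea is to feed the global-optimal price vector $\rho^{OPT}$ into DPCA's allocation driven by the \emph{actual} string $r$. Since $OPT^* = \max_{\rho \in \Pi^m} REV(\mathbf{B}, r, \rho) \ge REV(\mathbf{B}, r, \rho^{OPT})$, it suffices to lower-bound $REV(\mathbf{B}, r, \rho^{OPT}) = \sum_{i=1}^m \rho_i^{OPT}\, n_i(\mathcal{W})$, where $\mathcal{W}$ is the winner set produced by the greedy packing under $r$. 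A key structural fact I would exploit is that the candidate set $\mathcal{W}^c$ depends only on $\rho^{OPT}$ and $\mathbf{B}$ and not on the string; hence $\mathcal{W}$ and $\mathcal{W}^{OPT}$ are two feasible packings drawn from the \emph{same} pool $\mathcal{W}^c$, and the whole question reduces to comparing two orderings of one candidate set.

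I would then argue type by type. Fix a type $i$. If the greedy run under $r$ saturates type $i$ (it eventually fails to admit a candidate because of type $i$), then, as every request is at most $q_{max}$, the residual capacity at termination is below $q_{max}$, giving $n_i(\mathcal{W}) \ge K_i - q_{max} + 1 \ge K_{min} - q_{max} + 1$; combined with the trivial ceiling $n_i(\mathcal{W}^{OPT}) \le K_i \le K_{max}$, this yields the per-type bound $n_i(\mathcal{W}) \ge \frac{K_{min}-q_{max}+1}{K_{max}}\, n_i(\mathcal{W}^{OPT})$. In the complementary case, the aim is to show that essentially no type-$i$ demand is lost, so that $n_i(\mathcal{W}) \ge n_i(\mathcal{W}^{OPT})$ and the same inequality holds a fortiori. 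Summing the inequality weighted by $\rho_i^{OPT}$ over all $i$ gives $REV(\mathbf{B}, r, \rho^{OPT}) \ge \frac{K_{min}-q_{max}+1}{K_{max}}\, OPT$, and chaining with $OPT^* \ge REV(\mathbf{B}, r, \rho^{OPT})$ closes the bound.

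The main obstacle is precisely the non-saturated case, because the allocation is a genuinely multidimensional packing: a candidate demanding type $i$ can be rejected not because type $i$ is full but because some \emph{other} type is full, so $n_i(\mathcal{W})$ may fall below $n_i(\mathcal{W}^{OPT})$ even when type $i$ itself never saturates, and the naive per-type dichotomy has a gap. To make the argument airtight I would, at greedy termination, track the set $S$ of saturated types (those with residual capacity $< q_{max}$), prove that every rejected candidate consumes a positive amount of some type in $S$, and then charge each optimal winner rejected by greedy to such a saturated type, whose surplus $n_i(\mathcal{W}) \ge K_i - q_{max}+1$ pays for the lost revenue. Carrying out this coupling carefully, rather than the routine arithmetic of assembling the final ratio, is where the real work lies.
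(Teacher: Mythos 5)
Your upper bound and your reduction of the lower bound to bounding $REV(\mathbf{B}, r, \rho_{opt})$ --- via $OPT^* \ge REV(\mathbf{B}, r, \rho_{opt})$ and the observation that the candidate set $\mathcal{W}^c$ depends only on $(\mathbf{B},\rho_{opt})$, not on $r$ --- are exactly the paper's first steps. Where you diverge is instructive: the paper simply \emph{asserts} that under $(r,\rho_{opt})$ the greedy allocation sells at least $K_{min}-q_{max}+1$ instances of \emph{every} type, and pairs this with $OPT \le K_{max}\sum_{i=1}^m \rho_{opt_i}$; in other words, it applies your ``saturated'' branch uniformly to all $m$ types and never acknowledges your non-saturated case. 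Your dichotomy therefore correctly surfaces an assumption the paper leaves implicit, namely that every VM type is oversubscribed at the optimal price vector.

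However, your proof is incomplete precisely at the step you defer as careful-but-routine work, and that step cannot be completed: the charging argument for the non-saturated case fails in general, because the claimed ratio is false for an arbitrary fixed $r$ without the oversubscription assumption. Take $K_1=\dots=K_m=K$ with $K\gg q_{max}$; let $K$ ``thin'' users each request one unit of type $1$ only and bid $v_{max}$, and let $K$ ``fat'' users each request one unit of type $1$ plus $q_{max}$ units of every other type, bidding $v=v_{max}/2$ per instance. The thin users are candidates at every price vector (since $\rho_1\le v_{max}$), so if $r$ orders them first for every price vector, type $1$ saturates before any fat user is examined, every fat user is rejected (each needs one unit of the saturated type $1$), and $n_i(\mathcal{W})=0$ for all $i\ge 2$; hence $OPT^*\le v_{max}K$, while scheduling fat users first gives $OPT \ge (m-1)vK = (m-1)v_{max}K/2$, violating $OPT^* \ge \frac{K_{min}-q_{max}+1}{K_{max}}\,OPT \approx OPT$ once $m\ge 4$. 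Quantitatively, the surplus of a saturated type $i$ is worth at most $\rho_i K_i$ in revenue, but it can block up to $K_i$ optimal winners each carrying up to $(m-1)q_{max}v_{max}$ of revenue on \emph{other} types, so the charge exceeds the surplus by a factor that grows with $m\cdot q_{max}$ --- there is no coupling that closes this. The lemma (and your proof, and in fact the paper's) can only be rescued by adding the hypothesis that every type saturates in the greedy run under $(r,\rho_{opt})$ (or by arguing about typical rather than worst-case $r$), in which case your saturated branch alone suffices and the charging machinery is unnecessary.
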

\begin{proof}
Assuming that $OPT=REV(\mathbf{B}, r_{opt}, \rho_{opt})$, then the revenue of DPCA corresponding to $\rho_{opt}$ is $OPT'=REV(\mathbf{B}, r, \rho_{opt})$. Therefore, we get $OPT'\leq OPT^* \leq OPT$. Since $REV=\sum_{j \in \mathcal{W}} \sum_{i=1}^m \rho_i k_j^i$ and the total number of instances of each type VM requested by all winners will not exceed $K_{max}$, we have $OPT\leq K_{max} \sum_{i=1}^m \rho_{opt_i}$. As for $\rho_{opt}$ in DPCA, the number of instances of each type VM requested is at least $K_{min}-q_{max}+1$, then we can get $OPT' \geq (K_{min}-q_{max}+1)\sum_{i=1}^m \rho_{opt_i}$. So we can derive the relationship between $OPT$ and $OPT'$ expressed as $OPT'\geq \frac{K_{min}-q_{max}+1}{K_{max}} OPT$. Thus, $\frac{K_{min}-q_{max}+1}{K_{max}} OPT \leq OPT^* \leq OPT$.

%Assume the final clearing price to achieve the optimal revenue $OPT$ is $\rho'$, while the revenue corresponding to $\rho'$ in DPCA is $OPT'$. First, it is clear that $OPT' \leq OPT^* \leq OPT$ is satisfied. Since $\rho'$ can obtain the optimal revenue $OPT$, when the price is $\rho'$ in DPCA, at least one user can win. Then we have $OPT/n \leq OPT'$. Thus, $OPT/n \leq OPT^* \leq OPT$.
\end{proof}

\begin{mytheorem}\label{the:revenue}
The expected revenue of the CSP $E[REV(\mathbf{B},r, \rho)]$ is at least $\frac{K_{min}-q_{max}+1}{K_{max}} OPT -\frac{6\Delta}{\epsilon} ln(e+\frac{\epsilon|{\Pi}^m|OPT}{2 \Delta})$.
\end{mytheorem}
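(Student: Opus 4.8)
The statement is the standard expected-utility guarantee of the exponential mechanism, specialized to the revenue quality function and then combined with Lemma~\ref{the:OPT}. The plan is to first establish the intermediate bound $E[REV(\mathbf{B},r,\rho)] \ge OPT^* - \frac{6\Delta}{\epsilon}\ln\!\big(e+\frac{\epsilon|\Pi^m|OPT}{2\Delta}\big)$, where $OPT^*=\max_{\rho\in\Pi^m}REV(\mathbf{B},r,\rho)$ is the best revenue for the fixed predetermined string $r$, and then invoke Lemma~\ref{the:OPT}, which gives $OPT^*\ge \frac{K_{min}-q_{max}+1}{K_{max}}OPT$. Since the logarithmic penalty term is subtracted and does not involve $OPT^*$, replacing $OPT^*$ by the smaller quantity $\frac{K_{min}-q_{max}+1}{K_{max}}OPT$ on the right-hand side preserves the inequality and yields exactly the claimed bound; because this holds for every fixed $r$ it survives averaging over $r$ as well. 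Thus the whole difficulty is concentrated in the intermediate exponential-mechanism bound.

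The first step is a tail bound on the realized revenue. For any $c>0$, I would write the selection probability from \eqref{eq:probability} as a ratio and bound it crudely: the numerator $\sum_{\rho:REV<OPT^*-c}\exp(\frac{\epsilon REV}{2\Delta})$ has at most $|\Pi^m|$ terms, each smaller than $\exp(\frac{\epsilon(OPT^*-c)}{2\Delta})$, while the denominator is at least the single optimal term $\exp(\frac{\epsilon OPT^*}{2\Delta})$. This gives $Pr[REV<OPT^*-c]\le |\Pi^m|\exp(-\frac{\epsilon c}{2\Delta})$, the usual concentration of the exponential mechanism around the optimum.

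Next I would convert this into a bound on the expectation. Since revenues are non-negative, $E[REV]\ge (OPT^*-c)\,Pr[REV\ge OPT^*-c]\ge (OPT^*-c)\big(1-|\Pi^m|\exp(-\frac{\epsilon c}{2\Delta})\big)$ for $c\le OPT^*$, hence $E[REV]\ge OPT^*-c-OPT^*|\Pi^m|\exp(-\frac{\epsilon c}{2\Delta})$. Choosing $c=\frac{2\Delta}{\epsilon}\ln\!\big(e+\frac{\epsilon|\Pi^m|OPT}{2\Delta}\big)$ makes $\exp(-\frac{\epsilon c}{2\Delta})=\big(e+\frac{\epsilon|\Pi^m|OPT}{2\Delta}\big)^{-1}$; then using $OPT^*\le OPT$ and lower-bounding the denominator by $\frac{\epsilon|\Pi^m|OPT}{2\Delta}$ shows the error term $OPT^*|\Pi^m|\exp(-\frac{\epsilon c}{2\Delta})$ is at most $\frac{2\Delta}{\epsilon}$. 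Since the chosen $c$ satisfies $c\ge\frac{2\Delta}{\epsilon}$ (the logarithm's argument exceeds $e$), the two penalty terms $c$ and $\frac{2\Delta}{\epsilon}$ are each at most $\frac{2\Delta}{\epsilon}\ln(e+\frac{\epsilon|\Pi^m|OPT}{2\Delta})$, so together they stay below the stated $\frac{6\Delta}{\epsilon}\ln(e+\cdots)$.

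The main obstacle is the bookkeeping around the constant and the edge cases rather than any deep idea: I must ensure the balancing choice of $c$ stays $\le OPT^*$ (or handle $c>OPT^*$ separately, where the $\frac{K_{min}-q_{max}+1}{K_{max}}OPT$ term is already dominated by the penalty and the inequality is trivial), and I must collapse the two additive penalties into a single $\ln(e+\cdots)$ term with a clean integer coefficient. The loose factor — my estimate gives $4$, whereas the statement allows $6$ — absorbs exactly this slack, so matching the stated constant is routine once the tail bound and the balancing choice of $c$ are in place.
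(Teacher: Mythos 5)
Your proposal is correct and follows essentially the same route as the paper: the standard exponential-mechanism concentration bound $Pr[REV < OPT^{*} - c] \le |\Pi^{m}|\exp(-\frac{\epsilon c}{2\Delta})$, the same balancing choice $c = \frac{2\Delta}{\epsilon}\ln\bigl(e + \frac{\epsilon|\Pi^{m}|OPT}{2\Delta}\bigr)$, and the final application of Lemma~\ref{the:OPT} together with $OPT^{*} \le OPT$. The only differences are bookkeeping: the paper derives the tail via the two sets $R_{t}$ and $\overline{R}_{2t}$ and a constraint on $t$, arriving at $OPT^{*} - 3t$ and hence the constant $6$, whereas your direct single-tail computation yields the slightly sharper constant $4$, which the stated bound absorbs.
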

\begin{proof}
%\emph{Proof:}
Let $OPT^*$ = $max_{\rho \in {\Pi}^m}$ $REV(\mathbf{B}, r, \rho)$ denote the maximum revenue for the mechanism $\mathcal{M}$. For a small constant $t>0$, define the sets $R_{t}=\{\rho \in {\Pi}^m:REV(\mathbf{B}, r, \rho)>OPT^*-t\}$ and $\overline{R}_{2t}=\{\rho \in {\Pi}^m: REV(\mathbf{B}, r, \rho)\leq OPT^*-2t\}$. Then we have
\begin{displaymath}
\begin{aligned}
Pr(\mathcal{M}(\mathbf{B}) \in \overline{R}_{2t})&\leq \frac{Pr(\mathcal{M}(\mathbf{B}) \in \overline{R}_{2t})}{Pr(\mathcal{M}(\mathbf{B}) \in R_{t})}\\
&=\frac{\sum_{\rho \in \overline{R}_{2t}}\frac{{exp}(\frac{\epsilon REV(\mathbf{B}, r, \rho)}{2\Delta})}{\sum_{{\rho}'\in {\Pi}^m}{exp}(\frac{\epsilon REV(\mathbf{B}, r, {\rho}')}{2\Delta})}}{\sum_{{\rho} \in R_{t}}\frac{{exp}(\frac{\epsilon REV(\mathbf{B}, r, {\rho})}{2\Delta})}{\sum_{{\rho}'\in {\Pi}^m}{exp}(\frac{\epsilon REV(\mathbf{B}, r, {\rho}')}{2\Delta})}}\\
&=\frac{\sum_{\rho \in \overline{R}_{2t}}{{exp}(\frac{\epsilon REV(\mathbf{B}, r, \rho)}{2\Delta})}}{\sum_{\rho \in R_{t}}{exp}(\frac{\epsilon REV(\mathbf{B}, r, \rho)}{2\Delta})}\\
&\leq \frac{|\overline{R}_{2t}|{exp}(\frac{\epsilon(OPT^*-2t)}{2\Delta})}{|R_{t}|{exp}(\frac{\epsilon(OPT^*-t)}{2\Delta})}\\
&=\frac{|\overline{R}_{2t}|}{|R_{t}|}{exp}(\frac{-\epsilon t}{2\Delta})\\
&\leq |{\Pi}^m|{exp}(\frac{-\epsilon t}{2\Delta})\\
\end{aligned}
\end{displaymath}

Thus, $Pr(\mathcal{M}(\mathbf{B})\in R_{2t})\geq 1-|{\Pi}^m|{exp}(\frac{-\epsilon t}{2\Delta})$. And if $t$ satisfies the constraint that $t\geq \frac{2\Delta ln(\frac{|{\Pi}^m|OPT^*}{t})}{\epsilon}$, then $Pr(\mathcal{M}(\mathbf{B}) \in R_{2t})\geq 1-\frac{t}{OPT^*}$.

Therefore, for any $t\geq \frac{2\Delta ln(\frac{|{\Pi}^m|OPT^*}{t})}{\epsilon}$, we have
\begin{equation}\small\label{eq:expected-rev}
\begin{aligned}
E_{\rho \in {\Pi}^m}[REV(\mathbf{B}, r, \rho)]&\geq \sum_{\rho \in R_{2t}}REV(\mathbf{B}, r, \rho)Pr(\mathcal{M}(\mathbf{B})=\rho)\\
&\geq (OPT^*-2t)(1-\frac{t}{OPT^*})\\
&\geq OPT^*-3t
\end{aligned}
\end{equation}

Let $t=\frac{2\Delta ln(e+\frac{\epsilon|{\Pi}^m|OPT^*}{2\Delta})}{\epsilon}\geq \frac{2\Delta}{\epsilon}$, and we have
\begin{displaymath}\small
\begin{aligned}
\frac{2\Delta ln(\frac{|{\Pi}^m|OPT^*}{t})}{\epsilon}&\leq \frac{2\Delta ln(e+\frac{\epsilon|{\Pi}^m|OPT^*}{2\Delta})}{\epsilon}\\
&=t
\end{aligned}
\end{displaymath}

Then put $t=\frac{2\Delta ln(e+\frac{\epsilon|{\Pi}^m|OPT^*}{2\Delta})}{\epsilon}$ into Eq.~\eqref{eq:expected-rev} and combine Lemma~\ref{the:OPT}, the expected utility of mechanism $\mathcal{M}$ can be rewritten as
\begin{displaymath}
\begin{aligned}
\qquad & E_{\rho \in {\Pi}^m}[REV(\mathbf{B}, r, \rho)]\\
\geq \quad  & OPT^*-3t\\
\geq \quad & OPT^*-3\frac{2\Delta ln(e+\frac{\epsilon|{\Pi}^m|OPT^*}{2\Delta})}{\epsilon}\\
=\quad &OPT^*-\frac{6\Delta}{{\epsilon}} ln(e+\frac{\epsilon|{\Pi}^m|OPT^*}{2\Delta})\\
\geq \quad & \frac{K_{min}-q_{max}+1}{K_{max}} OPT -\frac{6\Delta}{{\epsilon}} ln(e+\frac{\epsilon|{\Pi}^m|OPT}{2 \Delta})
\end{aligned}
\end{displaymath}

Therefore, Theorem \ref{the:revenue} is proved that DPCA achieves approximate maximization of the CSP's revenue. %$\boxempty$
\end{proof}

Note that since $K_i\gg q_{max}$, $\frac{K_{min}-q_{max}+1}{K_{max}} OPT\approx \frac{K_{min}}{K_{max}} OPT$. And when $K_{min}\approx K_{max}$, Lemma~\ref{the:OPT} is expressed as $OPT\leq OPT^* \leq OPT$. It then holds that:
$$E_{\rho \in {\Pi}^m}[REV(\mathbf{B}, r, \rho)] \geq OPT-\frac{6\Delta}{{\epsilon}} ln(e+\frac{\epsilon|{\Pi}^m|OPT}{2\Delta}).$$

We next analyze the computational complexity of Algorithm~\ref{protocol:DPCA}. The calculation of the revenue includes $|\Pi^m|$ iterations. In each iteration, the computational complexity of winner candidate selection is $\mathcal{O}(nm)$. And the complexity of sorting candidates is $\mathcal{O}(n\log n)$ in general. In the instance allocation phase, its computation cost is $\mathcal{O}(nm)$. And the complexity of computing the revenue is $\mathcal{O}(nm)$. Therefore, the computational complexity of Algorithm~\ref{protocol:DPCA} is $\mathcal{O}(\max\{nm, n\log n\}\cdot |\Pi^m| )$.

\section{Improvement of DPCA}\label{sec:improve}
As proved previously, our baseline proposition DPCA demonstrate nice theoretical properties. However, its time complexity is exponential to $\Pi$. In this section, we improve our basic mechanism DPCA in terms of time efficiency and auction benefit.

\subsection{DPCA-S}\label{sec:DPCA-S}
In DPCA-S (S for single), we repeatedly adopt an instance of exponential mechanism to select the unit price of one type of VM until $m$ unit prices are determined. In this way, we avoid exhaustively searching all the combinations in $\Pi^m$.

\subsubsection{Design Details}
\noindent

In the design of DPCA-S, the domain of each unit price $\rho_l$ of $VM_l$ is also defined as $\Pi=[v_{min},v_{max}]$. The main difference is that the clearing price vector is determined by selecting the unit prices one by one, instead of selecting them as a combination. The procedure is also described in following two phases.

\textbf{(1) Price Vector Selection.} This phase selects the unit prices one by one with instances of exponential mechanism, to reduce the computation complexity caused by DPCA. The main challenge is how to design the utility function for each instance of exponential mechanism. Our observation is that when selecting the $l$th ($1 \le l \le m-1$) unit price, only the local information of the first $l$ unit prices can be seen, and thus the corresponding utility function should only depend on this information. Therefore, when selecting the $l$th ($1 \le l \le m-1$) unit price, we only compare users' bids and prices of the first $l$ types of VMs to estimate a partial revenue, and use it as the utility value. After this, when selecting the $m$th unit price, we can make use of the global information of unit prices, and apply the same method as the previous section to compute the revenue, using it as the utility value. This phase can be depicted in three steps as follows.

\textit{(a) Partial Revenue Computation.} Given the unit price $\rho_l \in \Pi$ and assuming that the unit prices of the first $l-1$ VM types have been selected as $\rho_1, \rho_2,..., \rho_{l-1}$, respectively, DPCA-S calculates the partial bid of user $j$ $(j \in \mathbb{N})$ with the bid information of the first $l$ VM types using Eq.~\eqref{eq:partial-bid}.
%DPCA-S calculates each user's total bid for $VM_l$ respectively:
\begin{equation}\label{eq:partial-bid}
\overline{B}_j^l=\sum_{i=1}^l k_j^i b_j^i
\end{equation}
The corresponding partial price can be computed as follows.
\begin{equation}\label{eq:partial-price}
P_j^l=\sum_{i=1}^l k_j^i \rho_i
\end{equation}
Then we can find the set $\mathcal{W}_l=\{j|\overline{B}_j^l \geq P_j^l|j \in \mathbb{N}\}$ of partial winner candidates. %It is worth noting that $\mathcal{W}_1=\{j|\overline{B_j^1}\geq P_j^1, j \in \mathbb{N}\}$.

Given $\rho_l$ and for the case of $1\leq l\leq m-1$, the partial revenue for the first $l$ VM types can be computed with Eq.~\eqref{eq:partial-revenue}.
\begin{equation}\label{eq:partial-revenue}
REV_{l}(\mathbf{B},\rho_l)=\sum_{j\in \mathcal{W}_l} \sum_{i=1}^l \rho_i k_j^i
\end{equation}

\textit{(b) Total Revenue Computation.} When computing the total revenue, we go into a similar process as described in Section~\ref{sec:dpca}-B. Specially, given the first $m-1$ unit prices $\rho_1, \rho_2,..., \rho_{m-1}$, we can varies $\rho_m$ and get $|\Pi|$ price vectors. The handling of the revenue computation goes similarly, except that we have only $|\Pi|$ price vectors instead of $|\Pi|^m$ ones, and we do not go into details. Let $\mathcal{W}$ denote the set of winners, given a price vector $\rho$ and the random string $r$, the total revenue can be computed as
\begin{equation}\label{eq:total-revenue}
REV_m(\mathbf{B}, \mathbf{K}, r, \rho_m)=\sum_{j\in \mathcal{W}} \sum_{i=1}^m \rho_i k_j^i
\end{equation}
Note that we do not write other unit prices in Eqs.~\eqref{eq:partial-revenue} and \eqref{eq:total-revenue} for simplicity. We will also not explicitly show $\mathbf{K}$ and $r$ in the following probability distribution calculations.

%And the revenue can be computed by Eq.~\eqref{eq:revenue_l} just as the case of $1\leq l\leq m-1$ where $\mathcal{W}$ replaces $\mathcal{W}_l$ .

\textit{(c) Unit Price Selection.} According to the exponential mechanism, DPCA-S sets the probability of the unit price $\rho_l$ to be exponentially proportional to its corresponding revenue.

\begin{equation}\label{eq:probability_l}
Pr(\mathcal{M}_l(\mathbf{B}) = \rho_l)=\frac{{\exp}(\frac{\epsilon_l REV_l(\mathbf{B}, \rho_l)}{2\Delta_l})}{\sum_{\rho_l' \in \Pi }{\exp}(\frac{\epsilon_l REV_l(\mathbf{B}, \rho_l')}{2\Delta_l})}
\end{equation}
where $\Delta_l=\sum_{i=1}^l q_{max} v_{max} = l \cdot q_{max} v_{max}$, $\epsilon_l=\frac{\epsilon}{m}$ is the privacy budget, and $1 \le l \le m$.

After this, each unit price $\rho_l$ $(1 \le l \le m)$ is randomly selected based on its probability distribution. The detailed process is described in Algorithm~\ref{protocol:DPCA-S}.

\textbf{(3) Winner Computation.} This phase is exactly the same as that of DPCA.

\renewcommand{\algorithmicrequire}{\textbf{Input:}}  %Use Input in the format of Algorithm
\renewcommand{\algorithmicensure}{\textbf{Output:}}  %UseOutput in the format of Algorithm

\begin{algorithm}[htb]
\caption{Single-Unit-Price Selection} \label{protocol:DPCA-S}
\begin{algorithmic}[1]

\REQUIRE  $\mathbf{B}$, $\mathbf{K}$, $\mathbb{N}$, $r$ and privacy budget $\epsilon$.

\ENSURE Final clearing price $\rho$ and Winners $\mathcal{W}$

%(1) \underline{Pricing Combination:}

\STATE Initialize $x_1=...=x_n=0$, $\mathcal{W}_1, ...,\mathcal{W}_m, \mathcal{W}\leftarrow \varnothing$
\STATE Define $\rho_l \in \Pi=[v_{min},v_{max}]$, $\epsilon_l=\frac{\epsilon}{m}$

\FOR{$l \leftarrow 1$ to $m$}

\FOR{$\rho_l \in \Pi$}

\STATE $W_l\leftarrow \{j|\overline{B_j^l} \geq P_j^l, j \in \mathbb{N}\}$

\IF{$l\leq m-1$}

\STATE $REV_l(\mathbf{B},\rho_l)=\sum_{j \in \mathcal{W}_l} \sum_{i=1}^l \rho_i k_j^i$

\ELSE

%\STATE $\mathcal{W}_m \leftarrow \{j|\overline{B_j} \geq P_j, j \in \mathcal{W}_{m-1}\}$

\STATE  Rank the $h$ candidates in $\mathcal{W}_l$ based on $r$

\FOR{$j \in \mathcal{W}_l$}

\IF{$\forall i \in \{1,...,m\}, \sum_{t=1}^{j-1}k_t^i x_t+k_j^i \leq K_i$}

\STATE $x_j=1, \mathcal{W} \leftarrow \mathcal{W} \bigcup j$

%\STATE $K_1\leftarrow K_1-k_j^1,...,K_m\leftarrow K_m-k_j^m$
\ENDIF
\ENDFOR

\STATE $REV_l(\mathbf{B}, \mathbf{K}, r, \rho_l)=\sum_{j\in \mathcal{W}} \sum_{i=1}^l \rho_i k_j^i$

\ENDIF
\ENDFOR

\FOR{$\rho_l \in \Pi$}

\STATE $Pr(\mathcal{M}_l(\mathbf{B}) = \rho_l)=\frac{{\exp}(\frac{\epsilon_l REV_l(\mathbf{B}, \rho_l)}{2\Delta_l})}{\sum_{\rho_l' \in \Pi }{\exp}(\frac{\epsilon_l REV_l(\mathbf{B}, \rho_l')}{2\Delta_l})}$

\ENDFOR

%\STATE Select $\rho_l$ according to the above probability distribution

\STATE $\rho_l \leftarrow \mathcal{M}_l(\mathbf{B})$

\ENDFOR

%\STATE Get the $m-1$ unit prices $\rho_1,\rho_2,...,\rho_{m-1}$ and $W_{m-1}$
%
%
%\FOR{$\rho_m \in \Pi$}
%
%
%\STATE $\mathcal{W}_m \leftarrow \{j|\overline{B_j} \geq P_j, j \in \mathcal{W}_{m-1}\}$
%
%\STATE  Randomly rank the $h$ candidates in $\mathcal{W}_m$
%
%\FOR{$j \in \mathcal{W}_m$}
%\IF{$\forall i \in \{1,...,m\}, \sum_{t=1}^{j-1}r_t^i x_t+r_j^i \leq k_i$}
%
%\STATE $\mathcal{W} \leftarrow \mathcal{W} \bigcup j$
%
%\STATE $k_1\leftarrow k_1-r_j^1,...,k_m\leftarrow k_m-r_j^m$
%\ENDIF
%\ENDFOR
%
%\STATE $REV(\mathbf{B}, \mathbf{k}, \rho_m)=\sum_{j\in W} \sum_{i=1}^m \rho_i r_j^i$
%
%
%\ENDFOR
%
%\FOR{$\rho_m \in \Pi$}
%
%\STATE $Pr(\mathcal{M}(\mathbf{B}, \mathbf{k}) = \rho_m)=\frac{{\exp}(\frac{\epsilon_m REV(\mathbf{B}, \mathbf{k}, \rho_m)}{2\Delta_m})}{\sum_{\rho_m' \in \Pi }{\exp}(\frac{\epsilon_m REV(\mathbf{B}, \mathbf{k}, \rho_m')}{2\Delta_m})}$
%
%\ENDFOR
%
%\STATE $\rho_m \leftarrow \mathcal{M}(\mathbf{B},\mathbf{k})$

%\STATE Determine winners $\mathcal{W}$

%\IF{$\mathcal{W}= \varnothing$}
%
%\STATE Reselect $\rho$
%\ENDIF

\RETURN $\rho=(\rho_1, \rho_2, ..., \rho_m)$

\end{algorithmic}
\end{algorithm}

%Suppose the set $W_m$ contains $h$ users. To prevent users in $W_m$ from improving their utilities by changing their bids, we first randomly sort the $h$ users. In this case, the next process of removing the user who does not meet the constrain condition is independent of the user's bid.

\subsubsection{Analysis}
We now establish the economic properties of DPCA-S.

\begin{mytheorem}\label{the:dp_s}
DPCA-S achieves $\epsilon$-differential privacy.
\end{mytheorem}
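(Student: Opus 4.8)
The plan is to decompose DPCA-S into $m$ sequential invocations of the exponential mechanism and then recombine them through the composition result (Lemma~\ref{the:composition}). Concretely, for each $l$ with $1 \le l \le m$, the selection of the $l$th unit price $\rho_l$ according to the distribution in Eq.~\eqref{eq:probability_l} is precisely one instance $\mathcal{M}_l$ of the exponential mechanism, with range $\Pi$, privacy budget $\epsilon_l = \epsilon/m$, and utility function $REV_l$. Hence, once I show each $\mathcal{M}_l$ is $\epsilon_l$-differentially private, summing the budgets yields $\sum_{l=1}^m \epsilon_l = m \cdot (\epsilon/m) = \epsilon$, which is the claim.

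The first step is to verify that each per-round utility function has sensitivity at most the $\Delta_l$ declared in Algorithm~\ref{protocol:DPCA-S}, i.e. $\Delta_l = l \cdot q_{max} v_{max}$. For $1 \le l \le m-1$, the utility is the partial revenue $REV_l(\mathbf{B}, \rho_l) = \sum_{j \in \mathcal{W}_l} \sum_{i=1}^l \rho_i k_j^i$. The candidacy test $\overline{B}_{j'}^l \ge P_{j'}^l$ for any user $j'$ depends only on that user's own bid and the fixed prices, so altering a single user $j$'s bid profile $B_j$ changes only the presence and the contribution of the term for $j$; the total change is therefore bounded by $\sum_{i=1}^l \rho_i k_j^i \le \sum_{i=1}^l v_{max} q_{max} = l \cdot q_{max} v_{max} = \Delta_l$, using $\rho_i \le v_{max}$ and $k_j^i \le q_{max}$. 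For $l = m$ the utility is the full revenue $REV_m$ obtained through the random VM allocation, and I would invoke the same single-user argument already used in the proof of Theorem~\ref{the:dp}, giving sensitivity at most $\Delta_m = m \cdot q_{max} v_{max}$.

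With the sensitivity bounds established, the exponential-mechanism privacy guarantee (Theorem~\ref{the:dp}) immediately gives that each $\mathcal{M}_l$ is $\epsilon_l$-differentially private. I then apply Lemma~\ref{the:composition}: since the output price vector $\rho = (\rho_1, \ldots, \rho_m)$ is produced by running $\mathcal{M}_1, \ldots, \mathcal{M}_m$, each $\epsilon_l$-DP, the overall selection is $(\sum_{l=1}^m \epsilon_l)$-DP, i.e. $\epsilon$-DP. Because the subsequent winner-computation phase is a deterministic post-processing of $\rho$, it does not weaken the guarantee.

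The main obstacle I anticipate is the composition step, because the invocations are in fact \emph{adaptive}: the utility function $REV_l$ selecting $\rho_l$ depends on the already-chosen prices $\rho_1, \ldots, \rho_{l-1}$, whereas Lemma~\ref{the:composition} is stated for mechanisms run independently on the same database. I would resolve this by observing that for pure $\epsilon$-differential privacy the sequential (adaptive) composition bound coincides with the simple sum of budgets; once $\rho_1, \ldots, \rho_{l-1}$ are fixed, the probability-ratio argument of Theorem~\ref{the:dp} goes through verbatim for $\mathcal{M}_l$, so each round still contributes exactly $\epsilon_l$. A secondary point worth checking is that the $l=m$ sensitivity genuinely matches that of DPCA: this holds because the allocation order $r$ is independent of the bids and, under the standing assumption $q_{max} \ll K_i$, a single user's change perturbs the winner set by at most its own contribution to the revenue.
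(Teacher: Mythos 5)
Your proposal is correct and follows essentially the same route as the paper: each per-round selection $\mathcal{M}_l$ is an instance of the exponential mechanism with budget $\epsilon_l = \epsilon/m$, hence $\epsilon_l$-differentially private, and Lemma~\ref{the:composition} sums the budgets to $\epsilon$. Your added details --- the explicit sensitivity verification of $\Delta_l$, the observation that the composition is adaptive (each $REV_l$ depends on previously chosen prices) and that pure-DP sequential composition still yields the simple sum, and the post-processing remark for winner computation --- are all sound and in fact make rigorous steps the paper's two-line proof leaves implicit.
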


\begin{proof}
Let $\mathcal{M}_l$ denote the mechanism to select the unit price $\rho_l$. For each mechanism $\mathcal{M}_l$, it is simply an application of the exponential mechanism, so it can achieve $\epsilon_l$-differential privacy. Then, according to the composition lemma (Lemma~\ref{the:composition}), DPCA-S achieves $\epsilon$-differential privacy, where $\epsilon=\epsilon_1+\epsilon_2+...+\epsilon_m$.
\end{proof}

\begin{mytheorem}\label{the:dp-st}
DPCA-S achieves $\gamma$-truthful, where $\gamma=\epsilon \cdot m \cdot q_{max} v_{max}$.
\end{mytheorem}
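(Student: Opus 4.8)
The plan is to reuse, almost verbatim, the two-ingredient argument behind Theorem~\ref{the:gamma-truthfulness}: combine the differential-privacy guarantee of DPCA-S with a per-outcome truthfulness statement that holds once the clearing price vector has been fixed. Concretely, I would first show that for a fixed random string $r$ and a fixed output vector $\rho=(\rho_1,\dots,\rho_m)$ no user can improve its utility by misreporting, and then take the expectation over the random choice of $\rho$ (and $r$) using the $\epsilon$-differential privacy of DPCA-S established in Theorem~\ref{the:dp_s}.

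The first step is to observe that the Winner Computation phase of DPCA-S is \emph{identical} to that of DPCA. Once $\rho$ and $r$ are fixed, the candidate set is formed by the same comparison $\overline{B}_j \ge P_j$ with $P_j=\sum_{i=1}^m \rho_i k_j^i$, candidates are ordered by $r$, winners are chosen under the same instance constraints, and each winner pays $P_j=\sum_{i=1}^m \rho_i k_j^i$, which does not depend on its own bid. Hence the conditional allocation-and-payment map facing user $j$, given $(r,\rho)$, coincides exactly with the one analyzed in Lemma~\ref{the:truthfulness}. That lemma therefore applies without modification, yielding $u_j(V_j,\mathbf{B_{-j}},r,\rho)\ge u_j(B_j,\mathbf{B_{-j}},r,\rho)$ for every fixed $r$ and every $\rho\in\Pi^m$.

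Next I would assemble the expectation bound exactly as in the proof of Theorem~\ref{the:gamma-truthfulness}. By Theorem~\ref{the:dp_s} the composite mechanism that outputs the whole vector $\rho$ is $\epsilon$-differentially private, so $\exp(-\epsilon)\,Pr[\mathcal{M}(B_j,\mathbf{B_{-j}})=\rho]\le Pr[\mathcal{M}(V_j,\mathbf{B_{-j}})=\rho]$. Weighting the pointwise utility inequality by these probabilities, summing over $\rho$ and averaging over $r$ (using, as in Theorem~\ref{the:gamma-truthfulness}, that the truthful utility $u_j(V_j,\cdot)$ is nonnegative so the two inequalities may be combined termwise), gives $E[u_j(V_j,\mathbf{B_{-j}})]\ge \exp(-\epsilon)\,E[u_j(B_j,\mathbf{B_{-j}})]\ge (1-\epsilon)\,E[u_j(B_j,\mathbf{B_{-j}})]$. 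Finally, since each unit price and valuation lies in $[v_{min},v_{max}]$ and each requested quantity lies in $[0,q_{max}]$, the per-user utility satisfies $E[u_j]\le m\,q_{max} v_{max}$, so the additive loss $\epsilon\,E[u_j]$ is at most $\epsilon\,m\,q_{max} v_{max}$, establishing $\gamma$-truthfulness with $\gamma=\epsilon\,m\,q_{max} v_{max}$.

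The only place where DPCA-S genuinely differs from DPCA, and hence the main obstacle, is justifying that the per-outcome lemma still transfers. The concern is that in DPCA-S the bid of user $j$ enters the partial-revenue utilities $REV_l$ and thus influences the sequential selection of $\rho$ in a way absent from DPCA. I would stress that this influence is immaterial to truthfulness: the argument holds $\rho$ fixed, absorbing all bid-dependence of the \emph{selection} into the privacy factor $\exp(-\epsilon)$, and conditioned on $\rho$ the allocation and payment are produced by the identical DPCA subroutine. One must also be careful to invoke the \emph{composite} $\epsilon$-DP bound over the entire vector $\rho$ from Theorem~\ref{the:dp_s}, rather than a per-coordinate $\epsilon_l$-bound, so that the single privacy parameter matches the $\epsilon$ appearing in $\gamma$.
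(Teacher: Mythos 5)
Your proposal is correct and follows essentially the same route as the paper: the paper's own proof simply invokes the composite $\epsilon$-differential-privacy bound of Theorem~\ref{the:dp_s} to get $\exp(-\epsilon)\,Pr(\mathcal{M_S}(\mathbf{B'})=\rho)\le Pr(\mathcal{M_S}(\mathbf{B})=\rho)$ and then declares the rest identical to the proof of Theorem~\ref{the:gamma-truthfulness}, which is exactly your two-ingredient argument. If anything, you are more careful than the paper, since you explicitly justify why Lemma~\ref{the:truthfulness} transfers (the Winner Computation phase of DPCA-S is identical to DPCA's, so the conditional allocation-and-payment map given $(r,\rho)$ is unchanged) and why the termwise combination of the pointwise and probability inequalities is legitimate (nonnegativity of the truthful utility), both of which the paper leaves implicit.
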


\begin{proof}
According to Theorem~\ref{the:dp_s}, we can get $\exp(-\epsilon) Pr(\mathcal{M_S}(\mathbf{B'})=\rho) \le Pr(\mathcal{M_S}(\mathbf{B})=\rho)$ where $\mathcal{M_S}$ represents the mechanism DPCA-S. Then the proof of truthfulness is the same as Theorem~\ref{the:gamma-truthfulness}.
\end{proof}

It seems hard to give a theoretical bound for the revenue of DPCA-S. Instead, we will provide an experimental comparison between the revenues of DPCA-S and DPCA in Section~\ref{sec:experiment}.

Finally, we analyze the computational complexity of the Algorithm~\ref{protocol:DPCA-S}. The computational complexity of choosing the prices of the first $m-1$ types VM is $\mathcal{O}(nm \cdot (m-1)\cdot \Pi)$. And the complexity of the last price's choice is $\mathcal{O}(\max\{nm,n\log n\} \cdot \Pi)$. Thus, the computational complexity of the Algorithm~\ref{protocol:DPCA-S} is $\mathcal{O}(\max\{nm,n\log n\} \cdot m \cdot \Pi)$. Therefore, the complexity is greatly reduced from the exponential level to linear level.

\subsection{DPCA-M}\label{sec:DPCA-M}
In DPCA-S, we reduce the price space by selecting the unit prices one by one. However, this benefit comes at the price of reducing auction revenue because of large noise introduced in every unit price selection. A natural question is whether we can get a better auction result with an acceptable time efficiency. Following this line of thinking, we design a differentially private mechanism for combinatorial cloud auctions with multiple unit price selection, called DPCA-M.

The main idea is that we regard $t$ unit prices as a group, each time applying an instance of exponential mechanism to select a group of prices. Thus, we just have to use $\lceil m/t \rceil$ exponential mechanism instances to confirm these $m$ unit prices. To realize the tradeoff between time and benefits, let $t$ satisfy the following inequality, $i.e., 1 < t < m$. Actually, it can be interpreted as a generalization of DPCA and DPCA-S in the sense that DPCA and DPCA-S are the extreme cases when $t=m$ and $t=1$, respectively.
%(It is worth noting that when T=1, it is DPCA-S, and T=m,it is DPCA.)

\subsubsection{Design Details}
In order to select a group of $t$ prices, we can define each group set $\widetilde{\rho}_l=(\rho_{(l-1)t+1}, ... ,\rho_{l t})$ in the domain of $\Pi^t=[v_{min},v_{max}]^t$. For the last price group $\widetilde{\rho}_{\lceil m/t\rceil}={(\rho_{(\lceil m/t\rceil-1)t+1},...,\rho_m)}$, the number of types of VM may be less than $t$, so its set is defined as $\Pi^{(m-t (\lceil m/t\rceil-1))}$. The design can also be described in two phases as follows.

\textbf{(1) Price Vector Selection.} This phase is basically similar to the unit price selection of DPCA-S. The significant difference is that we view $t$ unit prices as a group, and randomly pick them out together at a time. Therefore, in the process of selecting $\widetilde{\rho}_l$, we should calculate the partial bids and partial prices of the first $l \cdot t$ $(1 \le l < \lceil m/t\rceil)$ VM types, or in the end the total bids and total prices of all $m$ VM types. The phase can be divided into two steps.

\textit{(a) Price Group Selection with Partial Revenue.} For $1 \le l < \lceil m/t\rceil$, we define the utility of the exponential mechanism as
\begin{equation}\label{eq:bid2}
REV_{l}(\mathbf{B},\widetilde{\rho}_l)=\sum_{j\in \mathcal{W}_l} \sum_{i=1}^{lt} \rho_i k_j^i
\end{equation}
Thus, a mechanism $\mathcal{M}_l$ chooses each price group $\widetilde{\rho}_l$ with the probability as
\begin{equation}\label{eq:group-probability-l}
Pr(\mathcal{M}_l(\mathbf{B}) = \widetilde{\rho}_l)=\frac{{\exp}(\frac{\epsilon_l REV(\mathbf{B}, \widetilde{\rho}_l)}{2\Delta_l})}{\sum_{\widetilde{\rho}_l' \in \Pi }{\exp}(\frac{\epsilon_l REV(\mathbf{B}, \widetilde{\rho}_l')}{2\Delta_l})}
\end{equation}
where $\Delta_l=l \cdot t \cdot q_{max} v_{max}$ and $\epsilon_l=\frac{\epsilon}{\lceil m/t\rceil}$ is the privacy budget.

\textit{(b) Price Group Selection with Total Revenue.} For $l = \lceil m/t\rceil$, this step first determines a winner candidate set $\mathcal{W}_l$ as previous step, and then uses a similar procedure as that of DPCA to allocation VM with instance constraints, obtaining the winner set $\mathcal{W}$, and finally computes the total revenue by Eq.~\eqref{eq:group-total-bid} as the utility of the exponential mechanism.
\begin{equation}\label{eq:group-total-bid}
REV_{l}(\mathbf{B}, \mathbf{K}, r, \widetilde{\rho}_l)=\sum_{j\in \mathcal{W}} \sum_{i=1}^{m} \rho_i k_j^i
\end{equation}
Then, the last unit price group is selected by Eq.~\eqref{eq:group-probability-l} with the sensitivity $\Delta_{\lceil m/t \rceil}=m \cdot q_{max} v_{max}$.

\textbf{(2) Winner Computation.} This step is the same as that of DPCA.

\subsubsection{Analysis}
Because DPCA-M can be viewed as a generalization of DPCA-S, it is easy to see that DPCA-M achieves differential privacy and approximate truthfulness. These proofs are similar as those of DPCA-S and we omit them. For the analysis of revenues, we again leverage the experimental method.

Additionally, the computational complexity of DPCA-M is $\mathcal{O}(\max\{nm,n\log n\} \cdot \lceil  m /t \rceil\cdot \Pi^t)$, which is a trade-off between DPCA and DPCA-S.

%Then in the process of selecting $\widetilde{\rho}_l$, the corresponding total bids and payments are calculated as follows, respectively:
%\begin{equation}\label{eq:bid2}
%\overline{B_j^l}=\sum_{i=1}^{(l-1)t+t} r_j^i b_j^i
%\end{equation}
%
%\begin{equation}\label{eq:payment2}
%P_j^l=\sum_{i=1}^{(l-1)t+t} r_j^i \rho_i
%\end{equation}

\section{Performance Evaluation}\label{sec:experiment}

In this section, we fully implement our basic mechanism DPCA, improved mechanisms DPCA-S and DPCA-M, and do extensive experiments to evaluate their performances.

\subsection{Experimental Setting}

In the experimental setting, we consider the following two scenarios due to the range of per-instance bids and the numbers of VM types.
\begin{itemize}
\item \emph{Small Scale Scenario.} We let per-instance bids of users be generated randomly from the interval $[0, 10]$, and use small numbers of VM types (e.g. 2-6). This scenario is mainly used for running DPCA, since its computational complexity is $\Omega(\Pi^m)$, and when the size of $\Pi$ and $m$ are bigger, the execution time of DPCA may become unbearable. This also explains the necessity of DPCA-S and DPCA-M.
\item \emph{Practical Scenario.} We let per-instance bids be generated randomly from the interval $[0, 100]$, and apply more practical numbers of VM types (e.g. 20). This scenario is mainly used for evaluating the performances of DPCA-S and DPCA-M.
\end{itemize}

Other parameters are set as follows. The numbers of VM instances $K_i (i \in \{1,2,...,m\})$ and those requested by users $k_j^i$ are uniformly distributed over $[K_{min},K_{max}]$ and $[0,10]$, respectively. By default, the total privacy budget $\epsilon$ is 1 and $T$ represents the combination of $T$ unit prices of DPCA-M. Note that $\epsilon=1$ is the total privacy budget, the privacy budget for each exponential mechanism of DPCA-S is $\epsilon/m$ and that of DPCA-M is $\epsilon/\lceil m/T\rceil$. The experimental results are the averaged over 100 trials. For the performance evaluation, we adopt the following metrics:

\begin{itemize}
\item \emph{Revenue}: The sum of the price paid by all winning users.
\item \emph{User Satisfaction}: The ratio of the number of winning users to the total number of all users.
\item \emph{Running Time}: The time spent to execute an auction.

\end{itemize}

\subsection{Experimental Results}

When evaluating the performance of the proposed mechanisms, we compare them with a truthful cloud auction mechanism without privacy guarantee (denoted by "Basic") in \cite{Wang2012When}.

\begin{figure}[htbp]
\begin{center}
\includegraphics[width=0.5\textwidth]{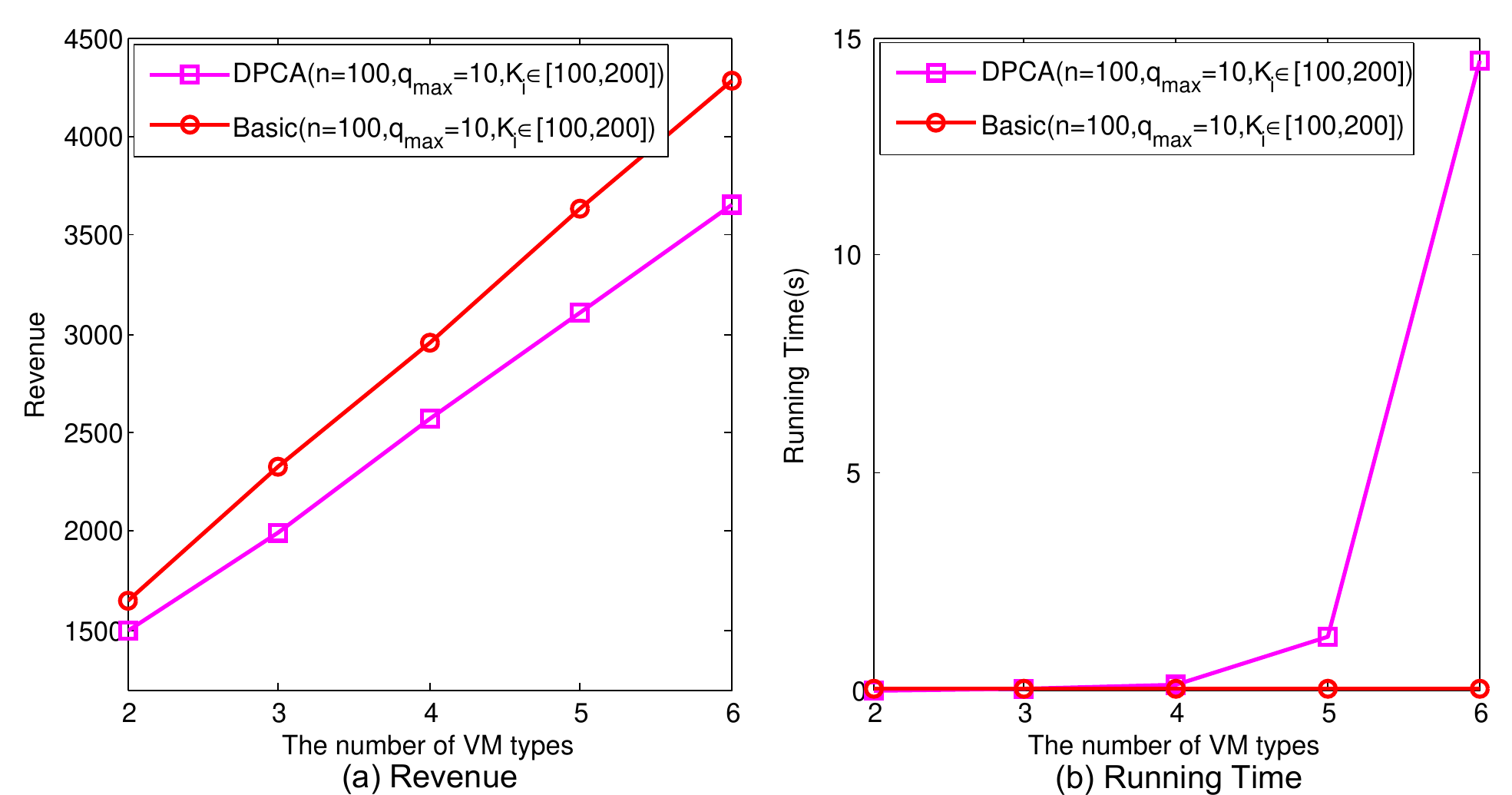}
%\captionsetup{justification=centering}
\caption{Comparison between between DPCA and Basic}\label{fig:types}
\end{center}
\end{figure}

\begin{figure*}[htbp]
\begin{center}
\includegraphics[width=0.8\linewidth]{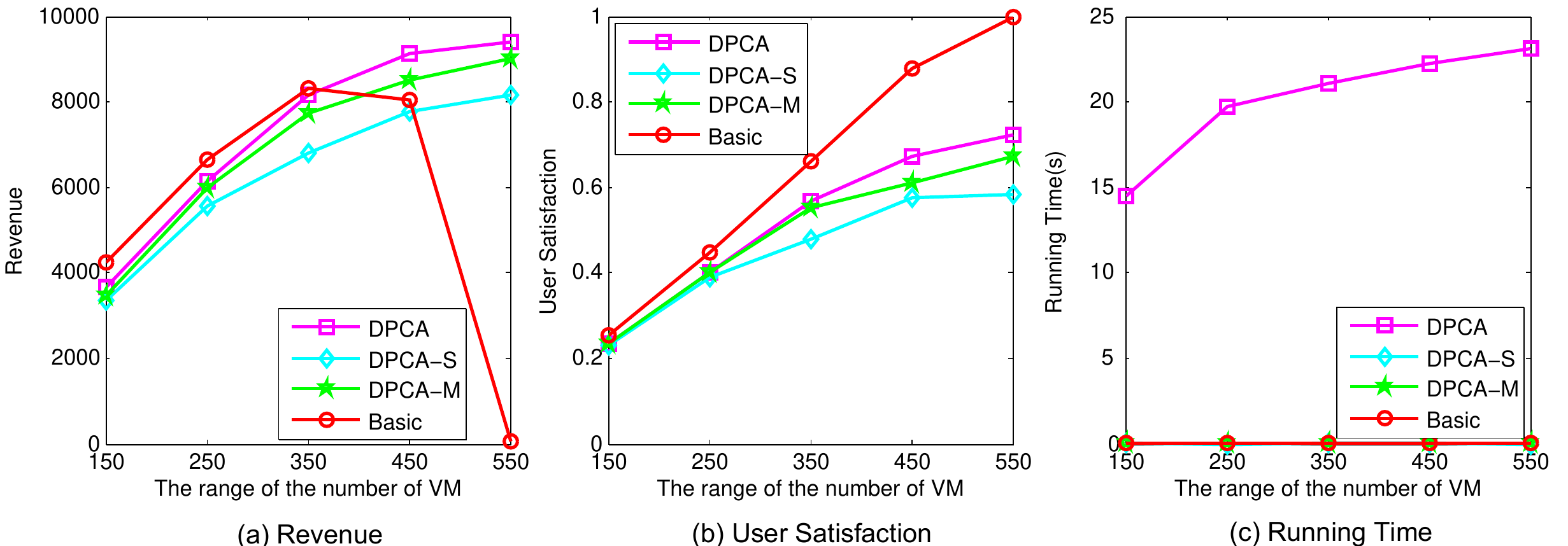}
%\captionsetup{justification=centering}
\caption{Revenue, user satisfaction and running time as the range of the number of VM grows}\label{fig:kmin}
\end{center}
\end{figure*}

\begin{figure*}[ht]
\centering
\begin{minipage}[b]{0.485\linewidth}
\includegraphics[width=1\textwidth]{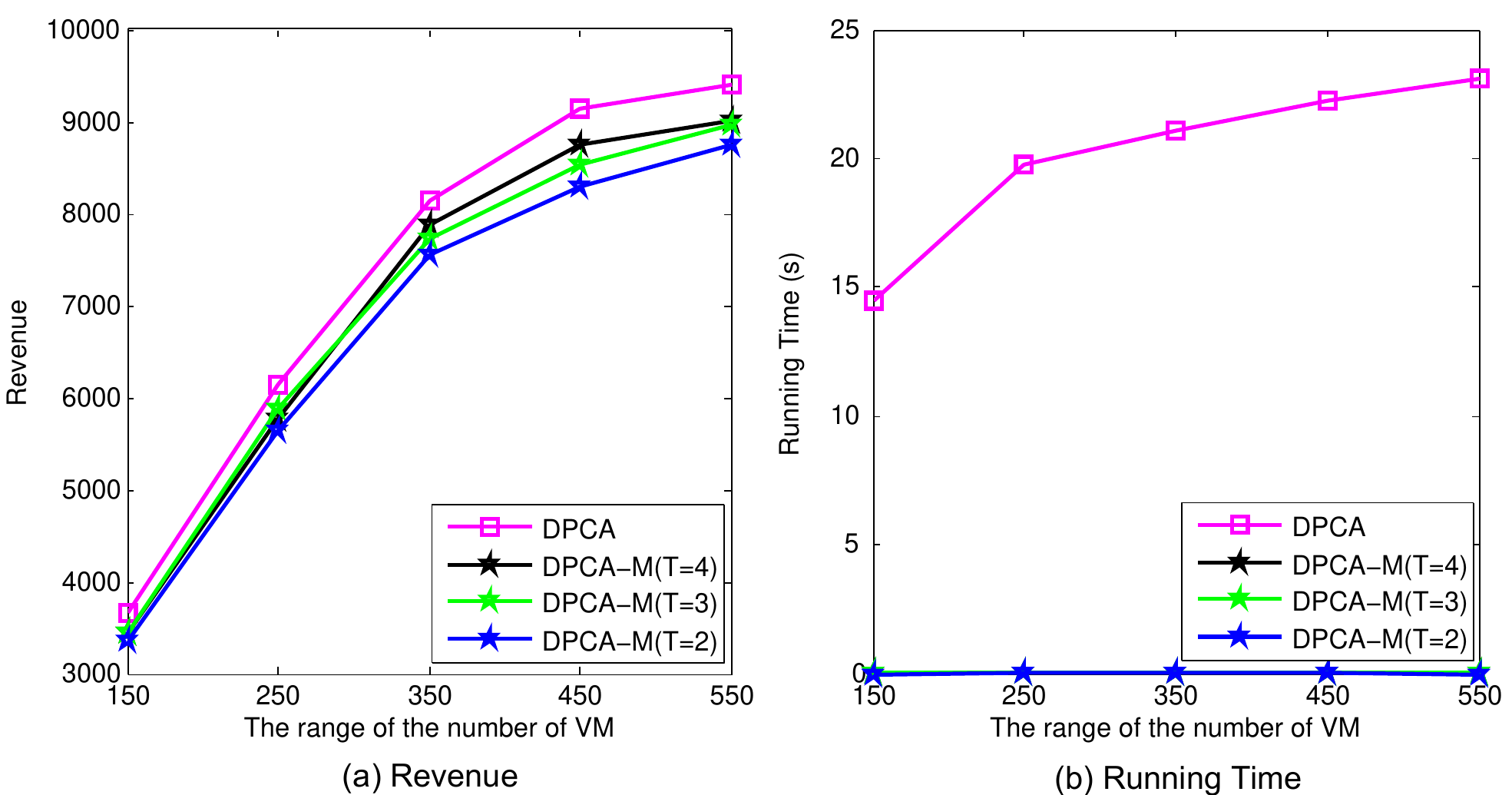}
\caption{Comparison between between DPCA and DPCA-M}\label{fig:group}
\label{fig:minipage1}
\end{minipage}
\quad
\begin{minipage}[b]{0.485\linewidth}
\includegraphics[width=1\textwidth]{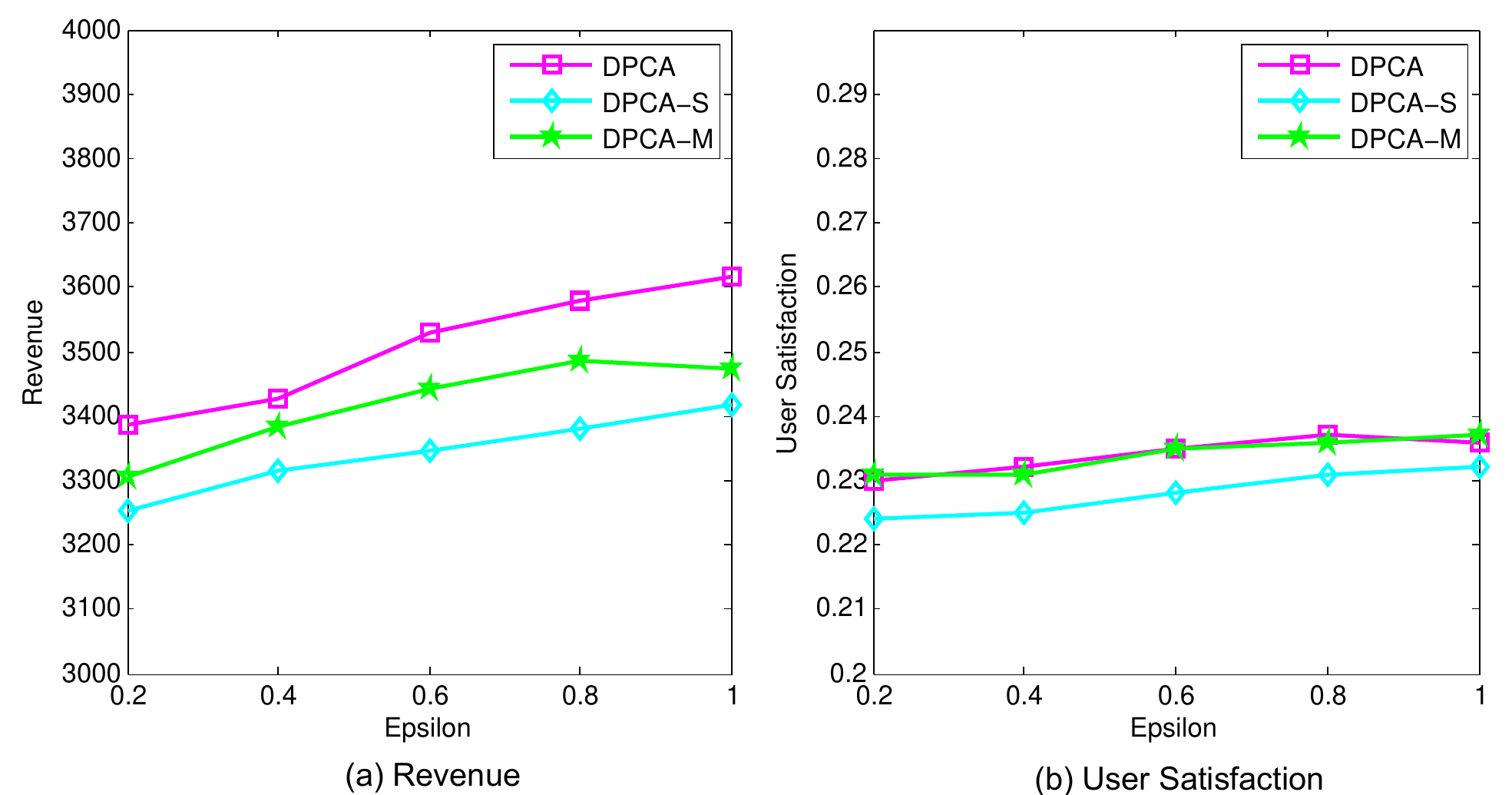}
\caption{Revenue and user satisfaction as $\epsilon$ grows}\label{fig:epsilon}
\label{fig:minipage2}
\end{minipage}
\end{figure*}

\begin{figure*}[htbp]
\begin{center}
\includegraphics[width=0.78\linewidth]{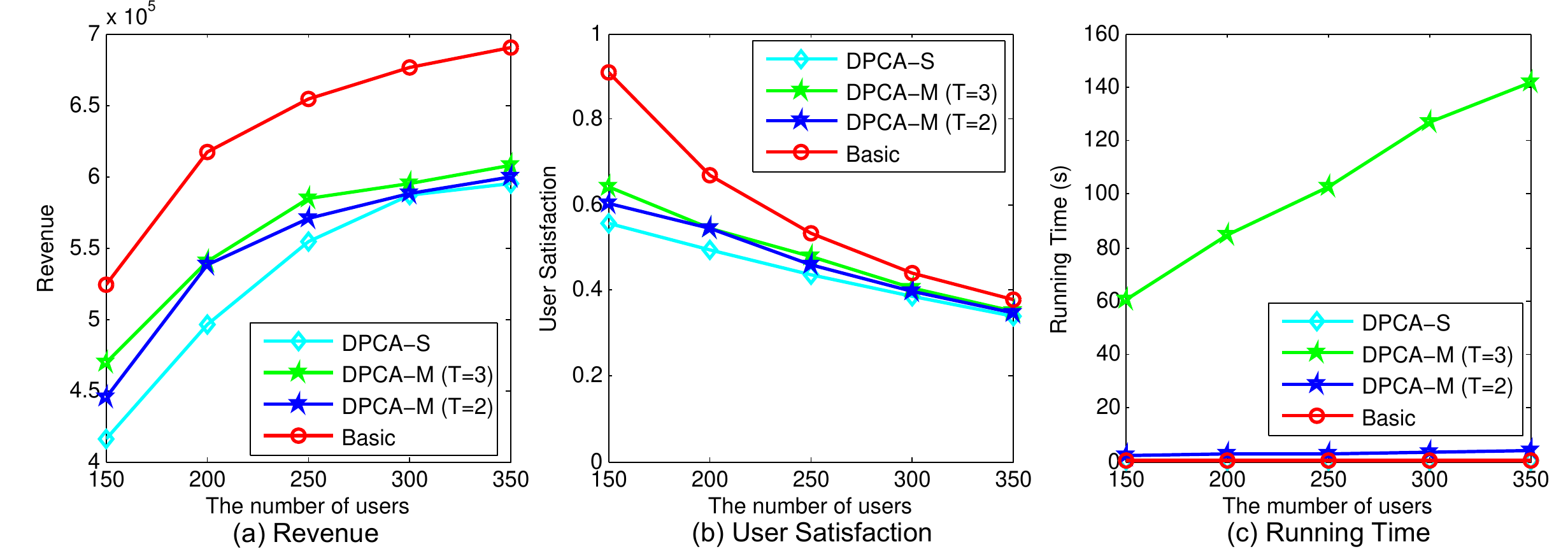}
%\captionsetup{justification=centering}
%\caption{Revenue, user satisfaction and running time as the number of users grows in practical scenarios}\label{fig:large}
\caption{The performance of DPCA-S, DPCA-M and Basic at $m=20$ and the per-instance bid interval $[0,100]$}\label{fig:large}
\end{center}
\end{figure*}

\textit{(1) Performance comparisons in small scale scenario.}

In Fig.~\ref{fig:types}, we plot the revenues and running times of DPCA and Basic as the number of VM types varies from $2$ to $6$, with $n=100$ and $K_i \in [100,200]$. From Fig.~\ref{fig:types}(a), we can see that the revenues of both mechanisms increase as $m$ increases, and Basic's revenue is superior to that of DPCA, since DPCA sacrifices some of its revenue to achieve differential privacy. In Fig.~\ref{fig:types}(b), the running time of Basic is basically maintained at around $10$ millisecond, while that of DPCA increases very fast as $m$ grows. This is because that, with the increase of $m$, the space for the unit price combination becomes exponentially large, which leads to that the running time of DPCA also gets large rapidly. In a word, the experimental results demonstrate that DPCA provides privacy preservation with a mild revenue cost, but it is only suitable for auctions with small numbers of VM types, otherwise the running time would grow exponentially fast.

%\textit{(2) Performance comparison of four mechanisms.}

Fig.~\ref{fig:kmin} illustrates the comparisons of the revenues, user satisfactions and running times for DPCA, DPCA-S, DPCA-M and Basic, as the range $[K_{min},K_{max}]$ varies from $[100,200]$ to $[500,600]$, and $m=6$, $n=100$. Note that we use the midpoint (i.e., $\frac{K_{min}+K_{max}}{2}$) of a range to represent the range. From Fig.~\ref{fig:kmin}(a), we make the following observations. (a) the revenue of Basic increases as the numbers of VMs increase before the range $[300,400]$, after which the revenue drops steeply to nearly zero. (b) The revenues of DPCA, DPCA-S, DPCA-M all grow with the increase of numbers of VMs and eventually exceeds that of Basic. (c) DPCA performs the best, DPCA-M with $T=3$ takes the second place, and DPCA-S is the last one.

The reason for observation (a) is that the payment scheme of Basic is based on $\emph{critical users}$. As the numbers of VM increase, the VMs becomes oversupplied and more and more users win, and it is hard to find a $\emph{critical user}$ for each winner. Eventually, when all users win, the revenue becomes 0. For observation (b), the cause is that our mechanisms DPCA, DPCA-S, DPCA-M are priced according to the exponential mechanism. After selecting the final clearing unit price vector, more winners will generate higher revenue. The reason underlying observation (c) may be as follows. DPCA employs only an instance of exponential mechanism, DPCA-M employs more, and DPCA-S employs the most. The more instances of exponential mechanism employed, the more noise introduced, resulting in poorer revenues. Furthermore, DPCA uses global bid information for pricing, while DPCA-M and DPCA-S use partial bid information. The partial information may deteriorates the revenues.

Expectedly, in Fig.~\ref{fig:kmin}(b), user satisfaction of all mechanisms grows with the number of VM. Basic's increases to $100\%$ but others gradually approaches an equilibrium. The reason is as for observation (a). And Fig.~\ref{fig:kmin}(c) depicts the running time of all mechanisms. DPCA takes much longer than the other mechanisms and the running time increases as the number of VM groups while the time of the other mechanisms remain roughly constant between 10 and 20 milliseconds. The running times shown are as expected.

The experimental results in Fig.~\ref{fig:kmin} indicate that on the basis of privacy guarantee, DPCA can obtain the best revenues with the longest running time, DPCA-S produce the worst revenues but with the least running time, while DPCA-M achieves in-between revenues and in-between running times. Meanwhile, when the number of VM provided is much higher than the number requested by users, our proposed mechanisms generates higher revenue than the mechanism without privacy protection.

%\textit{(3) Performance of DPCA-M.}
Fig.~\ref{fig:group} depicts the revenues and running times for DPCA and DPCA-M under different $T$ as $K_i$ changes. And we fix $m=6$ and $n=100$, respectively. As shown in Fig.~\ref{fig:group}(a), the revenues of all mechanisms increases as $K_i$ increases. For DPCA-M, the larger $T$ is, the higher revenue is. And DPCA which can be view as $T=6$ generates the highest revenue. From Fig.~\ref{fig:group}(b), we can see that the running time of DPCA-M under different $T$ remains basically consistent and is less than $30$ milliseconds, while that of DPCA is more than $10$ seconds, which increase linearly with the increase of $K_i$. The above observations show that for DPCA-M, the more unit prices in the group, the higher the revenue.

Fig.~\ref{fig:epsilon} traces the revenues and user satisfactions of our mechanisms as the privacy budget $\epsilon$ varies from $0.2$ to $1$, when $m=6$, $n=100$, $K_i \in [100,200]$ and DPCA-M under $T=3$. It can be seen that with the increase of privacy budget $\epsilon$, both revenue and user satisfaction of three mechanisms show increasing trends. %This is because that the larger $\epsilon$ is, the worse the differential privacy is achieved, and thus the better the utility is.
It demonstrates that the higher the privacy budget, the better the benefit of the auction.

\textit{(2) Performance comparisons in practical scenario.}

Fig.~\ref{fig:large} further illustrates the performance of DPCA-S and DPCA-M as the number of users increases from 150 to 350 in practical scenarios where $m=20$, $K_i \in [300,400]$ and the range of the per-instance bids is $[0,100]$. Under this configuration, we cannot run DPCA because the running time is quite unbearable. Expectedly, in Fig.~\ref{fig:large}(a), the revenues of all mechanisms increase as the number of users raises. Also the curve of Basic outperforms that of the other three mechanisms. For DPCA-S and DPCA-M under different $T$, the revenue of DPCA-M under $T=2$ is superior to that of DPCA-S, but is inferior to that of DPCA-M under $T=3$. From Fig.~\ref{fig:large}(b), we can observe that the user satisfaction of all mechanisms is gradually decreasing and eventually getting close when the supply is getting low. In Fig.~\ref{fig:large}(c), it is obvious that the running time of DPCA-M under $T=3$ is much higher than that of other mechanisms. And the time of DPCA-S is minimal and no more than $200$ milliseconds, while DPCA-M under $T=2$ spends more time than Basic, but does not exceed $5$ seconds. All the time is practically acceptable. The above experimental results show that both DPCA-S and DPCA-M are applicable to the practical scenarios and can generate good revenue. Moreover, for DPCA-M, the more unit prices in the group, the longer time consumed. And DPCA-M with an appropriate $T$ will provide a good trade-off between running times and auction revenues.

\section{Conclusion}\label{sec:conclusion}
In this paper, we have proposed a differentially private mechanism for combinatorial cloud auctions, called DPCA. To achieve differential privacy, appriximate truthfulness and high revenue, we randomly select the final clearing unit price vector based on the specific probability distribution. Through theoretical analysis, we prove the properties in privacy, truthfulness and revenue. Considering the time and benefits, we further develop DPCA-S and DPCA-M. We conduct simulations to evaluate their performance. The experimental results demonstrate that DPCA can generate better revenue but is only appropriate for small-scale cloud auctions, while DCPA-M is more suitable in practical scenarios.

\bibliography{sample-bibliography}{}
\bibliographystyle{IEEEtran}

% that's all folks
\end{document}